\DeclareFontFamily{U}{mathb}{\hyphenchar\font45}
\DeclareFontShape{U}{mathb}{m}{n}{
      <5> <6> <7> <8> <9> <10> gen * mathb
      <10.95> mathb10 <12> <14.4> <17.28> <20.74> <24.88> mathb12
}{}
\DeclareSymbolFont{mathb}{U}{mathb}{m}{n}
\DeclareMathSymbol{\llcurly}{3}{mathb}{"CE}
\DeclareMathSymbol{\ggcurly}{3}{mathb}{"CF}
\theoremstyle{plain}
\newtheorem{theorem}{Theorem}
\newtheorem{lemma}{Lemma}
\newtheorem{definition}{Definition}
\newtheorem{corollary}{Corollary}
\newtheorem{claim}{Claim}
\title{Improving the Price of Anarchy via Predictions in Parallel-Link Networks}
\author{George Christodoulou\thanks{Aristotle University of
    Thessaloniki \& Archimedes, Athena Research Center, Greece. Email: \texttt{\{gichristo,vgchrist\}@csd.auth.gr} }
\and{Vasilis Christoforidis\footnotemark[1]}
\and Alkmini Sgouritsa\thanks{Athens University of Economics and Business \&  Archimedes, Athena Research Center, Greece. Email: \texttt{\{alkmini,ioa.vlahos\}@aueb.gr}}
\and {Ioannis Vlachos\footnotemark[2]}}
\date{}
\begin{document}
\maketitle

\begin{abstract}

We study non-atomic congestion games on parallel-link networks with affine cost functions. We investigate the power of machine-learned predictions in the design of {\em coordination mechanisms} aimed at minimizing the impact of selfishness. Our main results demonstrate that enhancing coordination mechanisms with a simple advice on the input rate can optimize  the social cost whenever the advice is accurate (\emph{consistency}), while only incurring minimal losses even when the predictions are arbitrarily inaccurate (\emph{bounded robustness}). Moreover, we provide a full characterization of the consistent mechanisms that holds for all monotone cost functions, and show that our suggested mechanism is optimal with respect to the robustness. We further explore the notion of smoothness within this context: we extend our mechanism to achieve error-tolerance, i.e. we provide an approximation guarantee that degrades smoothly as a function of the prediction error, up to a predetermined threshold, while achieving a bounded robustness. 
\end{abstract}

\section{Introduction}

A fundamental issue in large decentralized systems is the impact of selfish behavior, wherein each player selects their strategy to minimize their own cost without considering the overall system efficiency. The decentralized decision-making can lead to suboptimal outcomes for the population as a whole. In this context, the celebrated notion of the Price of Anarchy (PoA) \cite{KoutsoupiasP99} is the standard metric that quantifies the inefficiency arising due to players' selfish behavior. It captures the ratio of the social cost in the worst Nash equilibrium to the optimal social cost. In this work, we consider congestion games, which is a fundamental resource allocation problem, from the perspective of anticipating players' strategic behavior in order to improve overall system performance. 

Parallel-link networks serve as a fundamental model for studying congestion phenomena, where each link represents an independent resource and players choose among them. Analyzing parallel-link networks yields key insights into the performance of mechanisms in more complex networked environments. 
Moreover, this model captures a wide range of very well-studied games, including load balancing and scheduling games, where tasks must be assigned to machines with potentially different delays. More precisely, there is a set of available machines and a set of users, each associated with an infinitesimally small task that needs to be processed by a machine. Each machine is characterized by a cost function that depends on the congestion that appears on that machine. Each user selects a strategy, i.e., choosing a machine, by seeking to minimize their individual completion time. 

To mitigate the impact of selfish behavior, we explore the use of coordination mechanisms (CMs) \cite{tcs/ChristodoulouKN09:CMs, algorithmica/ChristodoulouMP14} that can be seen as an intervention to improve the overall system performance. Additionally, motivated by the observation that additional information on the input can have a substantial impact on the quality of induced equilibria, we study the effect of exogenous information provided in the form of machine-learned predictions on the design of CMs aimed at steering the players towards more efficient outcomes. This approach of enhancing algorithmic decision-making with exogenous information has had a profound impact in recent algorithmic research \cite{jacm/LykourisV21,cacm/MitzenmacherV22}. In line with this direction, a closely related work by \cite{sigecom/GkatzelisKOlliasSgouritsaTan22} incorporates predictions about the anticipated demand in the design of learning-augmented cost-sharing mechanisms in scheduling games.

\subsection{Our Results}

We study learning-augmented CMs in the context of network non-atomic congestion games over parallel-link networks with affine cost functions. Given a prediction of the total demand we design {\em consistent} CMs that achieve the optimal performance with respect to the PoA when the prediction is accurate, while remaining near-optimal even under arbitrarily erroneous predictions, i.e., they have bounded robustness. More precisely:

\begin{itemize}
    \item In Section~\ref{sec:CMs} we provide a thorough study of consistent CMs. First, we provide a full characterization of consistent CMs for any monotone cost function (Lemma~\ref{lemma:characterizationCons}). We then propose a mechanism that applies the minimal necessary modification which we call \textsc{MinCharge} and show that it is consistent and also $2$-robust (Thm~\ref{thm:Optimality}). We then utilize our characterization to show that any consistent mechanism must be at least $2$-robust (Lemma~\ref{lemma:LBAffine}), establishing the optimality of \textsc{MinCharge} among all consistent CMs. 
    \item In Section~\ref{sec:smoothness} we study the tradeoff between robustness and {\em smoothness} in the design of consistent CMs. Smoothness is a desirable property of learning-augmented algorithms, which ensures that the performance of the mechanisms degrades gracefully as a function of the prediction error. First, we extend the characterization of consistent CMs to capture \emph{error-tolerance}, ensuring that the approximation guarantee degrades smoothly as a function of the prediction error, up to some specified error-tolerance threshold (Lemma~\ref{lemma:characterizationConsSmooth}). We then propose the \textsc{ErrorTolerant} mechanism and analyze its approximation and robustness for the case of multiple links (Thm~\ref{thm:errorTolerance}).  For the case of two links, we also provide the tight approximation guarantee when the prediction error does not exceed the error-tolerance threshold.
\end{itemize}

\subsection{Related Work} 

 We consider the problem of routing traffic to improve the performance of a network under congestion. We assume that in absence of a central authority, users behave selfishly to minimize their cost. As generally equilibria do not minimize the social cost, we adopt the celebrated notion of the Price of Anarchy to analyze the performance degradation due to selfish behavior \cite{KoutsoupiasP99}. A vast amount of literature has  followed this notion and obtained tight bounds for a wide variety of games, e.g.,  \cite{RoughgardenT02,Roughgarden03Indep}. A lot of attention has been given specifically to networks of parallel links \cite{mor/AcemogluOzdaglar07, eor/HarksSV19}. \cite{GairingLuckingMavronicolasMonien10} studied the problem of routing $n$ users on $m$ parallel machines. \cite{DBLP:journals/mor/FalkenhausenH13,GkatzelisPountourakisSgouritsa21,ChristodoulouGkatzelisSgouritsa} considered resource-aware cost-sharing protocols for scheduling games.

{\bf Interventions.} Coping with the effects of selfish behavior has been a central theme in algorithmic game theory.  
A prominent line of research explored the use of taxation mechanisms and tolls to incentivize agents toward more efficient outcomes --either by incorporating them in the total disutility experienced by each player, or treating them as internal transfers within the system that can be refunded later. It is well-known that marginal taxes induce the optimal flow for the original cost functions as a Nash flow in the modified network (e.g. see \cite{stoc/ColeDR03,focs/FleischerJM04,focs/KarakostasK04}). In this work, we focus on the concept of CMs;\footnote{CMs can be seen as flow-dependent local taxes on each resource, where the cost functions are allowed to increase in an arbitrary manner.} originally introduced in the context of scheduling games by \cite{tcs/ChristodoulouKN09:CMs}, CMs are implemented as local policies on each machine that dictate the order of processing tasks, and has since been extended to a variety of settings \cite{tcs/ImmorlicaLMS09,ColeCGMO15,Kollias13,algorithmica/ChristodoulouMP14,BhattacharyaIKM14,AzarFJMS15}. 

We underline that a fundamental element in our analysis is the fact that CMs work for any given demand, i.e., they do not assume foreknowledge of the total demand in the network as is required for marginal edge pricing. \cite{icalp/Colini-Baldeschi18} examined whether optimal flows can be induced via demand-independent tolls for any travel demand, albeit without performance guarantees given on the PoA. 
In contrast, we leverage additional information in the form of a demand prediction to induce optimal flows at minimal cost when the prediction is accurate. \cite{KarakostasKolliopoulos04} showed that for affine cost functions and fixed demand rate, enforcing the optimal flow with constant edge taxes results in a PoA of $2$ when both the tax and the latency contribute to the social cost. Perhaps most closely related to our work is the work of \cite{algorithmica/ChristodoulouMP14} which studies coordination mechanisms for parallel-link networks. They propose a mechanism in which the cost of the resulting Nash flow over the original total latency is strictly less than $4/3$, while they also provide a tight result for the case of two links, demonstrating the existence of a coordination mechanism that is $1.19$-competitive. \cite{isaac/WangYC16:ImprovedTax} obtained comparable improvements using alternative tax schemes. 

In this work, we take a further step by augmenting the mechanism with predictions about the total demand expected to flow through the network. Our work is at the intersection of two main threads: the design of interventions in congestion games, and the use of exogenous information in the form of machine-learned predictions--commonly referred to as the \emph{learning-augmented framework}. 

{\bf Algorithms with Predictions.} In recent years, machine-learned predictions have been widely adopted in designing algorithms \cite{sigecom/MahdianNS07,jacm/LykourisV21,cacm/MitzenmacherV22}. This approach aims to circumvent the limitations of worst-case analysis that has been dominant in algorithm analysis. More closely related to our work, is the line of work on learning-augmented mechanisms in game-theoretic settings, specifically on strategic scheduling \cite{ijcai/XuLu22,ITCS/BalkanskiGT23}, network formation games \cite{sigecom/GkatzelisKOlliasSgouritsaTan22}, facility location games \cite{ijcai/XuLu22,mor/AgrawalBGOT24,nips/BalkanskiGS24:randomizedFL,nips/Barak0T24:MAC}, mechanism design \cite{ijcai/XuLu22,sigecom/Colini-Baldeschi24,neurips:CSV24}, and auctions \cite{sigecom/LuW024:auctions,sigecom/BalkanskiGTZ24:OnlineMD,ijcai/CaragiannisK24,soda/Gkatzelis0T25}.

\section{Preliminaries}
\label{sec:prelims}

{\bf The setting.} We focus on the well-studied selfish routing problem, which serves as a clear and illustrative example of non-atomic congestion games. We consider single-commodity congestion games on parallel-link networks, defined by a directed graph with only two vertices $s,t$ representing the source and sink, respectively, and a set $M$ of $m$ links that have direction from $s$ to $t$. Each link $i\in M$ comes with a non-decreasing, non-negative latency functions $\ell_i(\cdot)$, that describes the travel time as a function of the amount of traffic on that particular link. We denote by $r$ the units of traffic/demand routed from $s$ to $t$ and we call it {\em input rate}. A flow $f=(f_i)_{i\in M}$ defines for each link $i\in M$ the amount of flow passed through $i$, which is denoted by $f_i$; it holds that $\sum_{i \in M} f_i = r$. Then, $\ell_i(f_i)=\ell(f)$ denotes the latency on link $i\in M$ under flow $f$. Naturally, we define the total latency of flow $f=(f_i)_{i\in M}$ as $C(f) = \sum_{i \in M} f_i \ell_i(f_i)$. For any input rate $r$, we denote as $C_{opt}(r)$ the cost of the optimal flow, i.e., the minimum cost of any flow at rate $r$. 

{\bf Price of Anarchy (PoA).} A feasible flow $f$ that routes $r$ units of flow from $s$ to $t$ is a Nash (or Wardrop) equilibrium, or simply a Nash flow, if and only if for any two links $i, j \in M$ with $f_{i} > 0$, $\ell_{i}(f) \le \ell_{j}(f)$. That is, all used links (with positive flow) have equal cost under a Nash flow, which makes the Nash flows to be essentially unique in terms of the edge costs \cite{Beckmann1956-}. We use $C_N(r)$ to denote the cost of any Nash flow at input rate $r$. Then, the Price of Anarchy (PoA) for input rate $r$ and the PoA for {\em any} rate are defined, respectively, as:

$$PoA(r) = \frac{C_N(r)}{C_{opt}(r)} \quad \text{and} \quad PoA = \max_{r>0}PoA(r)\,.$$

The tight $4/3$ PoA bound is already attained in a simple network consisting of two parallel links with affine latency functions. This network is commonly referred to as Pigou's network and the corresponding PoA bound as Pigou's bound \cite{RoughgardenT02}. 

{\bf Affine latency functions.} An affine latency function is of the form $\ell(x)=ax+b$, for $a,b\geq 0$. We denote by $\ell_i(x)=a_ix+b_i$ (with $a_i,b_i>0$) the latency function of link/edge $i\in \{1,\ldots,m\}$. We remark that if two links $i,j$ have equal constant factors $b_i=b_j$ then replacing them with one link of latency function $\ell(x)=\frac{a_ia_j}{a_i+a_j}x+b_i$ gives an equivalent network in the following sense: a flow $f$ is an equilibrium flow (optimal flow, respectively) in the initial latencies if and only if it is an equilibrium (optimal) flow under the merged latencies by adding the respective link flows (see Lemma~\ref{auxLem0}). Therefore, w.l.o.g. we assume that all constant terms are distinct and we rename the links in increasing order of their constant term, i.e., $0 \le b_1 < b_2 < ... < b_k$. Moreover, we may assume that there is at most one link with $a_i=0$, as in the case of multiple constant-latency links, only the one with the smallest $b_i$ would ever be used in any Nash or optimal flow.

{\bf Coordination Mechanisms (CMs).} A Coordination Mechanism (CM) is defined as a set of modified latencies $\hat{\ell}=(\hat{\ell}_i)_{i \in M}$ such that for any $i\in M$, $\hat{\ell}_i$ is strictly increasing\footnote{Under strictly increasing latencies, the CM admits a unique Nash flow in link loads.} and $\hat{\ell}_i(x) \ge \ell_i(x)$ for all $x \ge 0$, and  \cite{tcs/ChristodoulouKN09:CMs,algorithmica/ChristodoulouMP14}. 
Let $\hat{C}_{N}(r)$ be the maximum cost of any Nash flow at rate $r$ under the modified latency functions, i.e., the cost of the worst case Nash flow when $\hat{\ell}$ is used rather than $\ell$. Then, 
the engineered (modified) PoA, termed ePoA, for input rate $r$, and the ePoA for {\em any} rate are defined, respectively, as

$$ePoA(r) = \frac{\hat{C}_{N}(r)}{C_{opt}(r)} \quad \text{and} \quad ePoA=\max_{r>0} ePoA(r)\,.$$

Note that the optimum refers to the minimum cost routing with respect to the \emph{original} latency functions, whereas the cost of the Nash flow includes any increment applied on the latency functions by the CM. 

{\bf Learning-augmented framework.} We consider a prediction on the input rate, $\bar{r}$, that can be seen as an estimate of the expected traffic that is going to flow in the network. We want to design CMs that leverage the prediction to obtain improved ePoA bounds. The induced cost of a Nash flow and the ePoA of a CM now depends on the predicted input rate $\bar{r}$, therefore we write:

$$ePoA(r,\bar{r}) = \frac{\hat{C}_{N}(r,\bar{r})}{C_{opt}(r)} \quad \text{and} \quad ePoA=\max_{r,\bar{r}>0} ePoA(r,\bar{r})\,.$$

When the prediction is correct, we want to induce the optimal routing with minimum cost for this input rate $\bar{r}$, without sacrificing worst-case guarantees, even if the given prediction is arbitrarily wrong. Given a prediction $\bar{r}$, we say that the CM is {\em consistent} if it achieves $ePoA(\bar{r},\bar{r})=1$, for any $\bar{r}>0$, and it is {\em $\rho$-robust} if it achieves $ePoA\leq \rho$.  

{\bf Prediction error - Smoothness.}
    We further use the prediction error to measure the inaccuracy of the prediction. Given a non-atomic congestion game with a predicted input rate $\bar{r}$ and actual input rate $r$, the \emph{prediction error} is given by

    $$\eta(r,\bar{r}) = \lvert r - \bar{r} \rvert.$$

 A desirable property for a mechanism is to be smooth with respect to the prediction error. We say that a mechanism is {\em smooth} if its approximation guarantee degrades gracefully with the prediction error. More precisely, a CM is  smooth if its ePoA$(r,\bar{r})$ is continuous as a function of $r$ for any $\bar{r}>0$. Moreover, we say that a CM is error-tolerant if for any $\bar{r}>0$, ePoA$(r,\bar{r})$ is continuous for any $r\in [\bar{r}-\bar{\eta}, \bar{r}-\bar{\eta}]$, i.e., when $\eta(r,\bar{r})\leq \bar{\eta}$.

{\bf Equilibrium existence.}
We use the term Nash flow to refer to the equilibrium concept used throughout the paper; for the connection between Nash flows and Wardrop equilibria see \cite{HaurieMarcotte1983}. In our setting, we allow latency functions to be discontinuous, since it is known that the use of continuous modified latency functions yields no improvement \cite{algorithmica/ChristodoulouMP14}. However, standard Wardrop equilibria may not always exist under discontinuous latencies. To address this, we adopt the notion of User Equilibria \cite{transci/BernsteinS94:UserEq}, which aligns with Wardrop's first principle, i.e., no individual commuter can unilaterally improve their travel time by switching routes. Formally, we say that a feasible flow $f$ that routes $r$ units of flow is a User Equilibrium, if and only if for any two links $i, j \in M$ with $f_{i}>0$, 

$$\ell_{i}(f) \le \liminf_{\epsilon \downarrow 0}\ell_{j}(f + \epsilon\boldsymbol{1}_{j} -\epsilon\boldsymbol{1}_{i}), $$

where $\boldsymbol{1}_i$ denotes a unit of flow passing through link $i$. Under lower-semicontinuous, non-decreasing latencies, the existence of user equilibria is guaranteed  \cite{transci/BernsteinS94:UserEq}, and we henceforth refer to them as simply Nash flows. 
For a more thorough treatment, we refer the reader to \cite{NesterovDePalma,patriksson2015traffic, MarcottePatriksson}.

\section{A Learning-Augmented Coordination Mechanism for Parallel-Link Networks}
\label{sec:CMs}

In this section, we first derive a characterization of all consistent mechanisms (Lemma \ref{lemma:characterizationCons}). Then, we define a meaningful class of parametrized mechanisms that satisfy the necessary and sufficient conditions of the characterization. We propose a mechanism within this class that applies the minimal required modification, and we show that it is $2$-robust (Theorem \ref{thm:Optimality}). Finally, we establish a matching lower bound, proving that no consistent mechanism can be $(2-\epsilon)$-robust, for any $\epsilon>0$ (Lemma \ref{lemma:LBAffine}).

\subsection{Characterization of Consistent Mechanisms}

We begin with the characterization of all consistent mechanisms. Notably, this result applies to all monotone latency functions, and not merely affine ones.

\begin{lemma} [Characterization lemma]
\label{lemma:characterizationCons}
    Given any set of arbitrary monotone latency functions $(\ell_i)_{i \in M}$, a predicted input rate $\bar{r}$, and any CM, $(\hat{\ell}_i)_{i \in M}$, we denote by $\bar{f}^*=(\bar{f}_i^*)_{i\in M}$ the optimal flow for input rate $\bar{r}$ and we define the maximum latency on any used link under $\bar{f}^*$ as $L = \max_{i: \bar{f}_i^*>0} \ell_i(\bar{f}_i^*)$. Then, the CM is consistent, if and only if for every link $i$ with $\bar{f}_i^*>0$:

    \begin{itemize}
        \item $\hat{\ell}_i(\bar{f}_i^*) = \ell_i(\bar{f}_i^*)$ and
        \item $\liminf_{\epsilon \to 0^+} \hat{\ell}_i(\bar{f}_i^* + \epsilon) \ge L$
    \end{itemize}
\end{lemma}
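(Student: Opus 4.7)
The plan is to prove both directions of the equivalence separately, using throughout the User equilibrium formulation adopted in the preliminaries. The natural candidate for the worst Nash flow under $\hat{\ell}$ in the sufficiency direction will be $\bar{f}^*$ itself, while in the necessity direction it will be the unique Nash flow $\hat{f}$ under $\hat{\ell}$, which I will argue must coincide with an optimal flow.

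For sufficiency ($\Leftarrow$), I would assume both conditions and verify that $\bar{f}^*$ is a User equilibrium under $\hat{\ell}$ whose total cost equals $C_{opt}(\bar{r})$. The cost equality is immediate: summing only over used links,
\[
\sum_{i} \bar{f}_i^*\, \hat{\ell}_i(\bar{f}_i^*) \;=\; \sum_{i:\bar{f}_i^*>0} \bar{f}_i^*\, \ell_i(\bar{f}_i^*) \;=\; C_{opt}(\bar{r})
\]
by the first condition. For the equilibrium check, fix a used link $i$ and any link $j$. The first condition and the definition of $L$ give $\hat{\ell}_i(\bar{f}_i^*) = \ell_i(\bar{f}_i^*) \le L$. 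If $j$ is also used, the second condition supplies $\liminf_{\epsilon\downarrow 0}\hat{\ell}_j(\bar{f}_j^*+\epsilon) \ge L$. If $j$ is unused, I would invoke first-order optimality of $\bar{f}^*$: the common marginal cost $M^*$ on used links satisfies $M^* \ge L$ by monotonicity, and optimality forces $\ell_j(0) \ge M^*$; combined with $\hat{\ell}_j \ge \ell_j$ and monotonicity, this yields $\liminf_{\epsilon} \hat{\ell}_j(\epsilon) \ge \ell_j(0) \ge L$. Strict monotonicity of $\hat{\ell}$ then makes $\bar{f}^*$ the unique Nash flow in link loads, so $ePoA(\bar{r},\bar{r}) = 1$.

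For necessity ($\Rightarrow$), let $\hat{f}$ be the unique Nash flow under $\hat{\ell}$ at rate $\bar{r}$. Consistency gives $\sum_i \hat{f}_i\, \hat{\ell}_i(\hat{f}_i) = C_{opt}(\bar{r})$, while pointwise $\hat{\ell}_i \ge \ell_i$ and feasibility of $\hat{f}$ yield
\[
C_{opt}(\bar{r}) \;=\; \sum_i \hat{f}_i\, \hat{\ell}_i(\hat{f}_i) \;\ge\; \sum_i \hat{f}_i\, \ell_i(\hat{f}_i) \;\ge\; C_{opt}(\bar{r}),
\]
so both inequalities must be tight. The first tightness delivers $\hat{\ell}_i(\hat{f}_i) = \ell_i(\hat{f}_i)$ on every used link, which is the first condition after taking $\bar{f}^* = \hat{f}$, a choice justified by the second tightness (which says $\hat{f}$ is itself an optimal flow). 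The second condition then drops out of the User equilibrium inequality applied to $\hat{f}$: for any used $i,j$, $\hat{\ell}_i(\bar{f}_i^*) \le \liminf_{\epsilon\downarrow 0}\hat{\ell}_j(\bar{f}_j^*+\epsilon)$, and maximizing the left side over used $i$ replaces it with $L$.

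The main obstacle is that the optimal flow need not be unique for general monotone latencies, so it takes some care to identify the Nash flow $\hat{f}$ with the $\bar{f}^*$ appearing in the statement; I resolve this by simply taking $\bar{f}^*$ to be the particular optimum obtained as $\hat{f}$. A secondary technical subtlety is that modified latencies may be discontinuous, which is precisely why the second condition in the lemma, as well as the User equilibrium definition, are phrased via $\liminf$ rather than a pointwise evaluation at $\bar{f}_i^*$.
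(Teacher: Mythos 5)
Your proof is correct in substance and follows the same overall skeleton as the paper's (verify that the two conditions make $\bar{f}^*$ an equilibrium with unmodified cost; conversely, argue that consistency forces the Nash flow to be the unmodified optimal flow), but the two directions are packaged differently, and each difference shifts where the burden of proof lies. For sufficiency, the paper does \emph{not} pass through uniqueness of the Nash flow: it takes an arbitrary Nash flow $f\neq\bar{f}^*$, exhibits links $i,j$ with $f_i<\bar{f}_i^*$ and $f_j>\bar{f}_j^*$, and squeezes $\hat{\ell}_j(f_j)\ge L\ge\hat{\ell}_i(\bar{f}_i^*)\ge\hat{\ell}_i(f_i)$ against the equilibrium condition; since $\hat{C}_N$ is defined as the \emph{maximum} cost over Nash flows, your route of checking that $\bar{f}^*$ is a User equilibrium genuinely needs the uniqueness claim you invoke. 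That claim is asserted in the paper's footnote on strictly increasing latencies, so your argument is legitimate within the paper's framework, but it is a dependency the paper's own proof avoids. For necessity, your cost-sandwich $C_{opt}(\bar{r})=\sum_i\hat{f}_i\hat{\ell}_i(\hat{f}_i)\ge\sum_i\hat{f}_i\ell_i(\hat{f}_i)\ge C_{opt}(\bar{r})$ is a tidy way to extract both ``the Nash flow is optimal'' and ``the latencies are unmodified on it'' simultaneously, and deriving the second condition directly from the User-equilibrium inequality at the maximizing link is cleaner than the paper's proof by contradiction. The one move to be careful with is ``taking $\bar{f}^*$ to be the particular optimum obtained as $\hat{f}$'': the lemma fixes $\bar{f}^*$ in its statement, so you are not free to redefine it; what actually closes this is the same fact the paper uses explicitly, namely that under the standing assumptions (all latencies strictly increasing except possibly one) the optimal flow is unique, hence $\hat{f}$ must coincide with the given $\bar{f}^*$. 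Finally, your unused-link case in the sufficiency check argues via a common marginal cost $M^*$, which presumes differentiability even though the lemma allows arbitrary monotone latencies; the paper's corresponding step (``$b_j\ge L$, otherwise the optimal flow would use link $j$'') is equally informal, so this is a shared imprecision rather than a gap specific to your argument.
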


\begin{proof}

We first show that any CM that satisfies the two conditions of the statement is consistent. For this, suppose that the input rate is $r=\bar{r}$, then we show that for any Nash flow $f=(f_i)_{i\in M}$, $C(f)=C(\bar{f}^*)$. 
If $f$ and $\bar{f}^*$ are the same then trivially $C(f)=C(\bar{f}^*)$. So, suppose that there exists a link $i$ with $0\leq f_i<\bar{f}_i^*$. Therefore, there exists some link $j$ with $f_j>\bar{f}_j^*$. If $\bar{f}_j^*>0$, by the second condition of the lemma's statement, it holds that $\hat{\ell}_j(f_j)\geq L$. If $\bar{f}_j^*=0$, $b_j\geq L$, otherwise the optimal flow would use link $j$ as well; therefore, it is also $\hat{\ell}_j(f_j) \geq b_j \geq L$ in this case. Overall, it holds $\hat{\ell}_j(f_j)\geq L \geq \hat{\ell}_i(\bar{f}_i^*)\geq \hat{\ell}_i(f_i)$, where the second inequality comes from the definition of $L$. It cannot be $\hat{\ell}_j(f_j)> \hat{\ell}_i(f_i)$, since $f_j>0$, and this would violate the fact that $f$ is a Nash flow. Therefore, for any such links $i,j$ it holds that $\hat{\ell}_j(f_j)= \hat{\ell}_i(f_i)$, which results in $C(f)=C(\bar{f}^*)$.

Now we show that for any consistent CM the two lemma's conditions hold. Suppose again that the input rate is $r=\bar{r}$. Since all latency functions $(\ell_i)_{i\in M}$ are strictly increasing apart maybe from one, it is easy to check that the optimal flow $\bar{f}^*$ is unique. Therefore, in order any CM to be consistent it needs to enforce $\bar{f}^*$ as the Nash flow and not to increase the latencies on that flow. This proves the necessity of the first condition in the lemma. Regarding the second condition, suppose for the sake of contradiction that for some link $i$ with $\bar{f}_i^*>0$, $\liminf_{\epsilon \to 0^+} \hat{\ell}_i(\bar{f}_i^* + \epsilon) < L$. We show that $\bar{f}^*$ is not a Nash flow. Let $k=\arg\max_{j: \bar{f}_j^*>0} \ell_j(\bar{f}_j^*)$ be the link with the maximum latency under $\bar{f}^*$. Then, it holds that $\bar{f}_k^*>0$, and $\ell_k(\bar{f}_k^*)=L>\liminf_{\epsilon \to 0^+} \hat{\ell}_i(\bar{f}_i^* + \epsilon)$, which means that $\bar{f}^*$ is not a Nash flow and the CM is not consistent. Thus, the second condition in the lemma is also necessary.
\end{proof}

We further give the following lemma that provides useful properties to be crucial in the next sections. For this we define $k(\bar r)$ for some predicted input rate $\bar r$ to be the link with the maximum latency under the optimal flow $\bar{f^*}$ for $\bar r$, i.e., 
$$k(\bar r) = \arg \max_{j:\bar{f_j^*}>0}\ell_j(\bar{f_j^*})\,.$$

\begin{lemma}
    \label{auxLem1}
    Given a predicted input rate $\bar r$ and the optimal flow $\bar{f^*}$ under $\bar r$, it holds that $\bar{f_i^*}>0$ for any $i \le k(\bar r)$, and $\bar{f_i^*}=0$ for any $i > k(\bar r)$. Moreover, if there exists a link $j$ with $\bar{f_j^*}>0$ and $a_j=0$, then $j=k(\bar r)$.
\end{lemma}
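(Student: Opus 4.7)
The plan is to invoke the standard characterization of socially optimal flows: with latencies $\ell_i(x) = a_i x + b_i$, the flow $\bar f^*$ minimizes $\sum_i (a_i f_i^2 + b_i f_i)$ subject to $\sum_i f_i = \bar r$ and $f_i \ge 0$ if and only if the KKT conditions yield a common multiplier $\lambda \ge 0$ with $2 a_i \bar f_i^* + b_i = \lambda$ whenever $\bar f_i^* > 0$, and $b_i \ge \lambda$ whenever $\bar f_i^* = 0$. Equivalently, $\bar f^*$ is a Wardrop equilibrium for the marginal latencies $M_i(x) = 2 a_i x + b_i$. This single fact will drive everything.

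For the prefix property, combine the KKT conditions with the ordering $b_1 < b_2 < \cdots < b_m$: every used link satisfies $b_i \le \lambda$ and every unused link satisfies $b_i \ge \lambda$, so the used set is exactly $\{1, \ldots, k^*\}$, where $k^* := \max\{i : \bar f_i^* > 0\}$. To identify $k^*$ with $k(\bar r)$, I would compute the latency on each used link: if $a_i > 0$, then $2 a_i \bar f_i^* + b_i = \lambda$ gives $\ell_i(\bar f_i^*) = (\lambda + b_i)/2$; if $a_i = 0$, then $b_i = \lambda$ and $\ell_i(\bar f_i^*) = \lambda$. In both cases $\ell_i(\bar f_i^*)$ is non-decreasing in $b_i$, and the unique maximum over the prefix is attained at the largest index $k^*$. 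Hence $k(\bar r) = k^*$, establishing the first assertion.

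For the second assertion, suppose $a_j = 0$ and $\bar f_j^* > 0$. The marginal-cost equality forces $b_j = \lambda$. The first part places $j \le k^*$; if $j < k^*$, then $k^*$ is used and by KKT $2 a_{k^*} \bar f_{k^*}^* = \lambda - b_{k^*} < 0$ (since $b_{k^*} > b_j = \lambda$), contradicting $a_{k^*}, \bar f_{k^*}^* \ge 0$. Hence $j = k^* = k(\bar r)$. The only subtle point I foresee is the degeneracy caused by a constant-latency link: the objective is merely convex (not strictly convex) in $f_j$, and $\lambda$ coincides with $b_j$ rather than strictly exceeding it. The paper's reduction to at most one such link is precisely what keeps this degeneracy harmless and the maximum-latency link unambiguous, so separating the $a_i > 0$ and $a_i = 0$ cases in the latency computation above is all that is needed.
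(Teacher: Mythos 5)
Your proposal is correct and follows essentially the same route as the paper: both arguments rest on the equal-marginal-cost (KKT) characterization of the optimal flow, derive the prefix property from the ordering of the $b_i$, identify the maximum-latency link via $\ell_i(\bar f_i^*) = (\lambda + b_i)/2$ being increasing in $b_i$ (the paper writes this as $\ell_i(\bar f_i^*) = \ell_k(\bar f_k^*) - \tfrac{b_k - b_i}{2}$), and handle the constant-latency link by the same contradiction $b_{k^*} > b_j = \lambda$. Your explicit multiplier $\lambda$ is just a cleaner packaging of the paper's pairwise marginal-cost comparisons; no substantive difference.
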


\begin{proof}
    We first show that there do not exist two links $i,j$, with $i<j$, such that $\bar f_i^*=0$ and $\bar f_j^*>0$. Suppose on the contrary that  $\bar f_i^*=0$ and $\bar f_j^*>0$. Since $\bar f^*$ is the optimal flow, the marginal cost of link $j$ should not exceed that of $i$, i.e., $\frac{d}{d f_j^*}\bar f_j^*\ell_j(\bar f_j^*) \leq \frac{d}{d f_i^*} \bar f_i^*\ell_i(\bar f_i^*)$, which means that  $2a_j\bar f_j^*+b_j \leq 2a_i\bar f_i^*+b_i = b_i$. This contradicts the fact that $b_i<b_j$ (recall that the links are sorted in increasing order of their constant factor). Therefore, we have established that there exists some index $k$ such that $\bar f_i^*>0$ for all $i\le k$ and  $\bar f_i^*=0$ for all $i>k$. 
    We further need to show that this $k$ is the $k(\bar r)$, i.e., that $\ell_k(\bar f_k^*)\geq \ell_i(\bar f_i^*)$, for all $i\le k$. Again we use the fact that $\bar f^*$ is the optimal flow, meaning that $2a_i\bar f_i^*+b_i = 2a_k\bar f_k^*+b_k$, for all $i\le k$. Then, for any $i\le k$, 
    $$\ell_i(\bar f_i^*)=a_i\bar f_i^*+b_i=\frac{2a_i\bar f_i^*+2b_i}{2}=\frac{2a_k\bar f_k^*+b_k+b_i}{2}=\ell_k(\bar f_k^*)-\frac{b_k-b_i}{2}\le \ell_k(\bar f_k^*)\,.$$

       To complete the proof we show if there exists a link $j$ with $\bar{f_j^*}>0$ and $a_j=0$, then for all $i>j$, $\bar{f_i^*}=0$. Suppose on the contrary that there exists $i>j$, such that $\bar{f_i^*}>0$. Since $i>j$, it holds that $b_i> b_j$, leading to 
       $2a_i\bar f_i^*+b_i > b_j= 2a_j\bar f_j^*+b_j$, which contradicts the fact that $\bar f^*$ is the optimal flow.
\end{proof}

\subsection{Constant Mechanisms} 
We define a class of parametrized CM that we call \textsc{Constant} mechanisms, that given a parameter $c$ and a predicted input rate $\bar{r}$, they increase the latency of the links to $c$ for flow greater than the optimal flow under $\bar{r}$, as long as $c$ is greater than the original latency. All those CM are consistent when $c\ge L$, where $L$ is as defined in Lemma~\ref{lemma:characterizationCons}.  
We then specify as \textsc{MinCharge} mechanism the \textsc{Constant} mechanisms with parameter $L$, which we show that is optimal among all consistent CM.

\begin{definition}[\textsc{Constant} Mechanisms]
Given any latency functions $(\ell_i)_{i \in M}$ and a predicted input rate $\bar{r}$, let $\bar{f}^*=(\bar{f_i^*})_{i\in M}$ be the optimal flow for $\bar{r}$. 
    For any parameter $c \geq \ell_i(\bar{f_i^*})$, for all $i\in M$, a \textsc{Constant} mechanism is defined for $i\geq k(\bar r)$ as $\hat{\ell}_i(x)=\ell_i(x)$, $\forall x$, and for $i<k(\bar r)$ as
    $$
\hat{\ell}_i(x)=
\begin{cases}
c+\frac{\varepsilon}{f^c_i-\bar{f}^*_i} (x-\bar{f}^*_i), \quad \bar{f}^*_i < x < f^c_i \\
\ell_i(x), \quad \text{otherwise}
\end{cases}
$$
where $\varepsilon>0$ is an arbitrarily small value\footnote{The addition of $\frac{\varepsilon}{f^c_i-\bar{f}^*_i} (x-\bar{f}^*_i)$ is only to ensure that the modified latency functions are strictly increasing. This technicality guarantees some nice properties: the CM admits a unique Nash flow and the link flows in the Nash flow are nondecreasing in the input rate \cite{CominettiDS24}.}, and $f^c_i$ denote the amount of flow at which the modified latency function equals the original latency, i.e., $\ell_i(f^c_i) = c+\varepsilon$.\footnote{Note that due to Lemma \ref{auxLem1}, each $\ell_i$ for $i<k(\bar r)$ is strictly increasing} 
\end{definition}

Focusing on a specific instantiation of the \textsc{Constant} mechanisms, we define the following mechanism as the \textsc{MinCharge} mechanism. 

\begin{definition}[\textsc{MinCharge} Mechanism]
\label{def:minChargeMech}
Given any latency functions $(\ell_i)_{i \in M}$ and a predicted input rate $\bar{r}$, let $\bar{f}^*=(\bar{f_i^*})_{i\in M}$ be the optimal flow for $\bar{r}$ and $L = \max_{i: \bar{f}_i^*>0} \ell_i(\bar{f}_i^*)$. 
    The \textsc{MinCharge} mechanism is defined as the \textsc{Constant} Mechanisms with parameter $L$, or more analytically for $i\geq k(\bar r)$ as $\hat{\ell}_i(x)=\ell_i(x)$, $\forall x$, and for $i<k(\bar r)$ as 
$$
\hat{\ell}_i(x)=
\begin{cases}
L+\frac{\varepsilon}{f^L_i-\bar{f}^*_i} (x-\bar{f}^*_i), \quad \bar{f}^*_i < x < f^L_i \\
\ell_i(x), \quad \text{otherwise}
\end{cases}
$$
where $\varepsilon>0$ is an arbitrarily small value, and $f^L_i$ is such that $\ell_i(f^L_i) = L+\varepsilon$.
\end{definition}

Next we provide a robustness guarantee for the \textsc{MinCharge} mechanism.

\begin{theorem}
\label{thm:Optimality}
    The \textsc{MinCharge} mechanism is consistent and $2$-robust on parallel-link networks with affine cost functions.
\end{theorem}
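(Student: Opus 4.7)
The plan is to prove consistency directly via Lemma~\ref{lemma:characterizationCons}, and to prove $2$-robustness by analyzing how the Nash flow under $\hat\ell$ interacts with the ``jump regions'' that the mechanism introduces. \emph{Consistency} is immediate: by construction, for every link $i$ with $\bar f^*_i>0$, $\hat\ell_i(\bar f^*_i)=\ell_i(\bar f^*_i)$, and $\liminf_{\epsilon\to 0^+}\hat\ell_i(\bar f^*_i+\epsilon)=L$ (either from the jump when $i<k(\bar r)$, or from $\ell_k(\bar f^*_k)=L$ when $i=k(\bar r)$), satisfying both conditions of Lemma~\ref{lemma:characterizationCons}.

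For \emph{robustness}, fix any actual rate $r$ and let $\hat f$ denote a worst-case Nash flow under $\hat\ell$ with common latency $\lambda$ on its used links. I will split into two cases based on whether $\hat f$ enters the modified region on some link. If $\hat f_i\notin(\bar f^*_i,f^L_i)$ for every link, then $\hat\ell$ agrees with $\ell$ on the support of $\hat f$, and a short check confirms that the Nash conditions under $\hat\ell$ and under $\ell$ coincide (including the $\liminf$ condition at unused links). Hence $\hat f$ is a standard Nash flow for the original game at rate $r$, and the classical Pigou $4/3$ bound for affine costs gives $\hat C_N(r,\bar r)\le\tfrac{4}{3}C_{opt}(r)\le 2C_{opt}(r)$.

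The interesting case is when some used link $i<k(\bar r)$ has $\hat f_i\in(\bar f^*_i,f^L_i)$, which forces $\lambda=L$ in the limit $\varepsilon\to 0$ and therefore $\hat C_N(r,\bar r)=Lr$. The crux is to show that the optimal marginal cost $M(r)$ satisfies $M(r)\ge L$. Using the Nash identity $\hat\ell_j(\hat f_j)=L$ on every used link together with the strict monotonicity of $\ell_j$ for $j<k(\bar r)$, I will argue that $\hat f_j\ge\bar f^*_j$ for every $j<k(\bar r)$; combined with the fact that every $j>k(\bar r)$ has $b_j>L$ (from Lemma~\ref{auxLem1} and the optimality of $\bar f^*$) and is therefore unused, this yields $r>r_0:=\sum_{j<k(\bar r)}\bar f^*_j$. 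In the generic case $a_{k(\bar r)}>0$ this strengthens to $r\ge\bar r$, whence $M(r)\ge M(\bar r)=2L-b_{k(\bar r)}\ge L$; in the degenerate case $a_{k(\bar r)}=0$ (so $b_{k(\bar r)}=L$) a direct computation on the affine optimum shows that $M(r)=L$ on the whole range $r\ge r_0$. Either way $L\le M(r)$, and the standard inequality $C_{opt}(r)\ge rM(r)/2$ for affine latencies then gives $\hat C_N(r,\bar r)=Lr\le M(r)\cdot r\le 2C_{opt}(r)$.

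The main obstacle will be identifying the correct threshold on $r$ in the second case. A naive argument reduces it to $r\ge\bar r$, which fails when $a_{k(\bar r)}=0$: the jump case can then trigger at some $r$ strictly between $r_0$ and $\bar r$. The finer threshold $r>r_0$, combined with the fact that $M(\cdot)$ is already saturated at $L$ throughout $[r_0,\infty)$ when $a_{k(\bar r)}=0$, is what makes the argument go through.
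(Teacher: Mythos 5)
Your consistency argument and your second (``jump'') case are fine; the latter is essentially the paper's underprediction analysis in different clothing (the paper reduces to $r\to\bar r$ via the monotonicity of $r/C_{opt}(r)$ and then shows $\ell_i(\bar f_i^*)\ge L/2$ on every used link, which is the same as your $C_{opt}(r)\ge rM(r)/2$ with $M(r)\ge M(\bar r)=2L-b_{k(\bar r)}\ge L$). The genuine gap is in your first case. It is \emph{not} true that an equilibrium of $\hat\ell$ whose link loads avoid the open intervals $(\bar f_i^*,f_i^L)$ is a Nash flow of the original game. The whole point of \textsc{MinCharge} is that it pins links $i<k(\bar r)$ at $\bar f_i^*$: at such a link the user-equilibrium condition only requires the latencies of the other used links to be at most $\liminf_{\epsilon\to0^+}\hat\ell_i(\bar f_i^*+\epsilon)=L$, not at most $\ell_i(\bar f_i^*)$, so the equilibrium can (and typically does) have used links with strictly unequal \emph{original} latencies. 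Concretely: at $r=\bar r$ the unique equilibrium is $\bar f^*$ itself (the optimum, not the original Nash flow); for $r$ slightly below $\bar r$, link $1$ stays pinned at $\bar f_1^*$ while link $k(\bar r)$ carries the rest at a strictly higher latency; and when $\ell_{k(\bar r)}$ is constant and $r>\bar r$, all excess flow is absorbed by link $k(\bar r)$ at latency exactly $L$ while the other links remain pinned --- this last regime also lands in your first case, since no load enters an open jump interval. None of these flows is a Nash flow of the original game, so the classical Pigou $4/3$ bound cannot be invoked for them as you do.

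The conclusion of your first case is still correct, but it needs two separate arguments. For $r\le\bar r$, use the monotonicity of Nash and optimal link loads in the rate to place both the $\hat\ell$-equilibrium $f$ and the original optimum $f^*$ componentwise below $\bar f^*$; then the variational inequality $\sum_i\ell_i(f_i)f_i\le\sum_i\ell_i(f_i)f_i^*$ still holds (the modification is invisible on that region) and the Roughgarden--Tardos argument gives $4/3$. For the constant-$\ell_{k(\bar r)}$ regime with $r>\bar r$, bound $\hat C_N(r,\bar r)\le rL$ and reuse your own estimate $C_{opt}(r)\ge rL/2$ to get $2$. Finally, your closing worry is moot: by load monotonicity the jump case cannot trigger at any $r<\bar r$, and when $a_{k(\bar r)}=0$ it cannot trigger at all, since a link with latency strictly above $L$ would profitably deviate to link $k(\bar r)$, whose latency is identically $L$.
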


\begin{proof}
    The \textsc{MinCharge} mechanism trivially satisfies the requirements of Lemma \ref{lemma:characterizationCons}, hence it is consistent. We now analyze the behavior of the ePoA$(r,\bar{r})$ as a function of the input rate $r$ and the predicted input rate $\bar{r}$. We show that the worst-case inefficiency is realized for $r=\bar{r}+\epsilon$ for $\epsilon>0$ that goes to zero. We distinguish between the cases of $r\leq\bar{r}$ and $r>\bar{r}$. If $r\leq\bar{r}$, we show that ePoA$(r,\bar{r}) \leq$ PoA$(r)\leq 4/3$.

    \textbf{Case 1:} Overprediction ($r\leq\bar{r}$). Suppose that $f^*$ is the optimal flow under $r$ with respect to the original latency functions and $f$ denotes the Nash flow under $r$ with respect to the modified latency functions. In the overprediction case, for each link $i$ it holds that $f_i^*\leq \bar{f_i^*}$ and $f_i\leq \bar{f_i^*}$ \cite{algorithmica/ChristodoulouMP14, CominettiDS24}; thus, $\hat{\ell}_i(f_i^*)=\ell_i(f_i^*)$ and $\hat{\ell}_i(f_i)=\ell_i(f_i)$. Then, the variational inequality \cite{trsc.14.1.42/Dafermos,NesterovDePalma} holds for $f$ and $f^*$:

        $$\sum_{i} \ell_i(f_i) f_i \le \sum_{i} \ell_i(f_i)f_i^*\,.$$
    By following the proof of the celebrated result on the PoA of selfish routing due to \cite{RoughgardenT02}, we have that $\frac{\hat{C}_N(r)}{C_{opt}(r)} = 4/3$.

    \textbf{Case 2:} Underprediction ($r > \bar{r}$). We now consider the second case, where $r > \bar{r}$. Suppose again that $f^*$ and $f$ are the optimal flow w.r.t. the original latencies and Nash flow w.r.t. the CM, respectively, both under $r$. 

In the special case that $\ell_{k(\bar r)}$ is a constant, it should be $\ell_{k(\bar r)}=L$, and all the flow $\bar r -r$, will be routed through link $k(\bar r)$. Then, $$ePoA(r,\bar{r}) \leq \frac{rL}{C_{opt}(r)}\,.$$ 
    
     On the other hand, if $\ell_{k(\bar r)}$ is a strictly increasing function, then by Lemma~\ref{auxLem1}, every $\ell_{i}$ is strictly increasing for $i\leq k(\bar r)$.   In this case, every link incurs a latency higher than $L$ under $f$, where $L$ is the parameter of the \textsc{MinCharge} mechanism. To see this, note that for $r > \bar{r}$ there exists a link $i$ with $f_i>\bar{f}_i^*$ and therefore, $\hat{\ell}_i(f_i)>L$; then, if there exists a link $j$ with $\hat{\ell}_j(f_j)\leq L$, then there exists a sufficiently small $\epsilon>0$ such that  $\hat{\ell}_j(f_j+\epsilon)<\hat{\ell}_i(f_i)$, which violates the fact that $f$ is a Nash flow. 
    
    Having established that the minimum cost in any link under $r$ is more than $L$, we define $r_{max}$ to be the maximum rate for which the latency on each link in the Nash flow does not exceed $L+\varepsilon$ (as defined in Definition~\ref{def:minChargeMech}). 
    If $r>r_{max}$, all link flows in the Nash flow  are beyond the modification point, i.e., $f_i>f_i^L$ for all $i$ such that $\bar{f}_i^*>0$, and so $\hat{\ell}(f_i)=\ell(f_i)$, for all $i \in M$.
    It follows that $\hat{C}_N(r) = C_N(r)$. Hence, $ePoA(r) = PoA(r) \le 4/3$. Therefore, we focus on the interval $r \in [\bar{r}, r_{max}]$.

    In this regime, by omitting $\varepsilon$ as it is arbitrarily small, the ePoA is given by 
    $$ePoA(r,\bar{r}) = \frac{rL}{C_{opt}(r)}\,.$$ which we analyze below; note that this is the same bound as in the case of $\ell_{k(\bar r)}=L$. We first give the following claim to show that the ePoA is maximized when $r$ approaches $\bar{r}$.
    
    \begin{claim} \label{claim:derivative}
        $ \frac{r}{C_{opt}(r)}$ is a decreasing function of $r$. 
    \end{claim}

    \begin{proof}
        We take the derivative of the expression as a function of $r$, $\frac{C_{opt}(r) - r \frac{d}{dr}C_{opt}(r)}{C_{opt}(r)^2}$, and show that is non-positive. It is known that the optimal cost is $C_{opt}(r) = \frac{1}{\Lambda_j}(r^2+r\Gamma_j)-C_j$ \cite{algorithmica/ChristodoulouMP14}, where $\Lambda_j$, $\Gamma_j$ and $C_j$ are non-negative coefficients. Hence, it suffices to show that $r \frac{d}{dr}C_{opt}(r)\geq C_{opt}(r)$. From the expression for $C_{opt}(r)$, we compute:

        \begin{align*} r \frac{d}{dr}C_{opt}(r) = \frac{r}{\Lambda_j}(2r+\Gamma_j)=\frac{1}{\Lambda_j}(2r^2+\Gamma_jr) \geq C_{opt}(r).\end{align*}
\end{proof}

    Under the \textsc{MinCharge} mechanism, the ePoA$(r,\bar{r})$ is maximized for $r = \lim_{\epsilon\to 0^+} (\bar{r}+\epsilon)$. The ePoA is defined as:

        $$ePoA(r,\bar{r}) = \frac{\hat{C}_N(r)}{C_{opt}(r)}=\frac{rL}{C_{opt}(r)},$$

    where $L$ is the constant latency applied by the \textsc{MinCharge} CM. We analyze the derivative of ePoA with respect to $r$. By Claim \ref{claim:derivative}, $C_{opt}(r) < r \cdot C_{opt}(r)'$. Hence, the ePoA of the \textsc{MinCharge} CM is strictly decreasing for $r > \bar{r}$, and the maximum is attained in the limit as $r \rightarrow \bar{r}$.

Therefore for $r \in [\bar{r}, r_{max}]$ the ePoA$(r,\bar{r})$ is as follows:

\begin{equation}
\label{eq:ePoA_start}
    ePoA(r,\bar{r}) = \frac{\bar{r}L}{C_{OPT}(\bar{r})} = \frac{\bar{r}L}{\sum_{i:\bar f_i^*>0} \bar{f}_i^*\ell_i(\bar{f}_i^*)}
\end{equation}

We next show that for any $i\leq k(\bar r)$, which coincide with all links $i$ with $\bar{f}_i^*>0$, it holds $\ell_i(\bar{f}_i^*)\geq L/2$. Since $\bar{f}^*$ is an optimal flow, it holds that the marginal costs are equal in all links up tp $k(\bar r)$, therefore,  $2a_i\bar f_i^*+b_i=2a_k\bar f_k^*+b_k$. Then, 

$$\ell_i(\bar f_i^*) = a_i\bar f_i^*+b_i = a_k\bar f_k^*+\frac{b_k}2+\frac{b_i}2 = \ell_k(\bar f_k^*)-\frac{b_k}2+\frac{b_i}2\geq \ell_k(\bar f_k^*)-\frac{b_k}2 = L - \frac{b_k}2 \geq \frac L2\,,$$
where for the last equality we used the definition of $L$, and for the last inequality we used the fact that $L=\ell_k(\bar f_k^*)=a_k\bar f_k^*+b_k \geq b_k$.

We plug in that $\ell_i(\bar f_i^*)>\frac L2$ for all $i$ such that $\bar f_i^*>0$, to bound the optimal flow under $\bar{r}$:

$$C_{OPT}(\bar{r})=\sum_{i:\bar f_i^*>0} \bar{f}_i^*\ell_i(\bar{f}_i^*)\geq \frac L2 \sum_{i:\bar f_i^*>0} \bar{f}_i^* = \frac L2 \bar{r}\,.$$

Substituting this into equation~(\ref{eq:ePoA_start}) we get:

$$ePoA(r,\bar{r}) \leq \frac{\bar{r}L}{\frac L2 \bar{r}} = 2\,.$$ \end{proof}

We then show that the \textsc{MinCharge} CM is optimal with respect to robustness among all consistent CMs. For this we present an instance that precludes the existence of a consistent CM that is $(2 - \epsilon)$-robust for any $\epsilon > 0$.

\begin{lemma} [Lower bound]
\label{lemma:LBAffine}
    Any consistent CM is at least $2$-robust for affine cost functions. 
\end{lemma}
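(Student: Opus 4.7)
The plan is to exhibit, for every small $\epsilon'>0$, a concrete two-link affine instance together with a prediction $\bar r$ and an actual rate $r$ such that any consistent CM on that instance satisfies $ePoA(r,\bar r)\geq 2-\epsilon'$; sending $\epsilon'\to 0$ then shows that no consistent CM can be strictly better than $2$-robust. The family I would use is $\ell_1(x)=x$ and $\ell_2(x)=\alpha x+1$, parameterized by a large slope $\alpha$ and a small offset $\gamma>0$, with prediction $\bar r = 1/2+\gamma$. Routine optimization at $\bar r$ gives $\bar f_2^* = \gamma/(1+\alpha)$, $\bar f_1^* = \bar r - \bar f_2^*$, $L = \ell_2(\bar f_2^*) = 1+\alpha\gamma/(1+\alpha)$, and $k(\bar r)=2$. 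By Lemma~\ref{lemma:characterizationCons}, any consistent CM must satisfy $\hat\ell_1(\bar f_1^*) = \bar f_1^*$ and $M_1 := \liminf_{\epsilon\downarrow 0}\hat\ell_1(\bar f_1^*+\epsilon)\geq L$, while link~2 (already achieving $L$ at $\bar f_2^*$) imposes no effective extra constraint. A direct calculation shows $\bar r L / C_{opt}(\bar r)\to 2$ as $\alpha\to\infty$ and then $\gamma\to 0$, which identifies this as the extremal regime of the upper bound.

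To lower-bound the worst Nash cost at $r=\bar r+\delta$ with small $\delta>0$, I would do a case analysis on the Nash $f_1$ against $\bar f_1^*$: either (a) $f_1=\bar f_1^*$ and the extra $\delta$ is routed through link~2, which is a user equilibrium only when $\hat\ell_2(\bar f_2^*+\delta)=L+\alpha\delta\leq M_1$, i.e., when $\delta\leq (M_1-L)/\alpha$; or (b) $f_1>\bar f_1^*$, in which case $\hat\ell_1(f_1)\geq M_1\geq L$ and the equilibrium condition forces $\hat\ell_2(f_2)\geq L$ as well, yielding Nash cost $f_1\hat\ell_1(f_1)+f_2\hat\ell_2(f_2) = r\cdot\hat\ell_1(f_1)\geq rL$. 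The dichotomy to exploit is now: if the CM keeps $M_1$ close to $L$, regime~(b) is a Nash flow for arbitrarily small $\delta>0$ and $ePoA(r,\bar r)\geq rL/C_{opt}(r)\to \bar r L/C_{opt}(\bar r)\to 2$ under the limits above; if instead the CM inflates $M_1\gg L$ to prolong regime~(a), then the right endpoint $r_0=\bar r+(M_1-L)/\alpha$ is itself a regime~(b) Nash with cost at least $r_0 M_1$, so $ePoA(r_0,\bar r)\geq (r_0 M_1)/C_{opt}(r_0) = (M_1/L)\cdot(r_0 L/C_{opt}(r_0))$, and the second factor again tends to $2$ while $M_1/L\geq 1$.

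The main obstacle is the second sub-case, where the CM attempts to ``postpone'' the bad Nash to rates far from $\bar r$; I would handle it by sending $\alpha\to\infty$ first so that $r_0\to\bar r$ and $C_{opt}(r_0)\to C_{opt}(\bar r)$, making the factor $r_0 L/C_{opt}(r_0)$ converge to $\bar r L/C_{opt}(\bar r)$, and then sending $\gamma\to 0$ so that this ratio approaches $2$. The computation therefore shows that any postponement pays at least the multiplicative price $M_1/L\geq 1$ at the boundary, so the lower bound of $2$ persists in both sub-cases. Combining them yields $ePoA\geq 2-o(1)$ for any consistent CM in this family, and taking $\alpha\to\infty$ and $\gamma\to 0$ proves that robustness strictly less than $2$ is impossible.
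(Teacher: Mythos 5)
Your overall plan is the same as the paper's: a two-link instance with the prediction placed essentially at the threshold $b/2$ where the optimum barely opens link~2, the characterization of Lemma~\ref{lemma:characterizationCons} forcing a barrier of some height $M_1\ge L$ on link~1 at $\bar f_1^*$, and an analysis of the critical rate $r_0$ at which link~1 reopens. However, your treatment of the ``postponement'' sub-case contains a genuine gap. The order of quantifiers is: first the instance (your $\alpha,\gamma$) is fixed, and only then does the adversarial consistent CM choose $M_1$ --- possibly as a function of $\alpha$ (e.g.\ $M_1=L+\alpha$ gives $r_0=\bar r+1$ for every $\alpha$). So ``sending $\alpha\to\infty$ first so that $r_0\to\bar r$'' is not available: $r_0$ need not approach $\bar r$, and hence $r_0L/C_{opt}(r_0)$ need not approach $2$. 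Worse, by Claim~\ref{claim:derivative} that factor is \emph{strictly below} $2$ for every $r_0>\bar r$, i.e.\ precisely whenever $M_1>L$; so the decomposition $ePoA(r_0,\bar r)\ge (M_1/L)\cdot\bigl(r_0L/C_{opt}(r_0)\bigr)$ together with only $M_1/L\ge 1$ proves nothing --- the two factors fight each other, and one must verify that the \emph{product} $r_0M_1/C_{opt}(r_0)$ stays at least $2$ uniformly over all $M_1\ge L$. That verification is the heart of the proof and is exactly what the paper does: it writes the critical rate as $r=\bar f_1^*+f_2$ with $f_2\ge 0$ a free parameter encoding the barrier height, and checks by direct computation that $r\,\ell_2(f_2)/C_{opt}(r)\ge 2$ for \emph{all} $f_2\ge 0$.

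This computation is also where the slope of link~2 silently matters: the uniform inequality $r_0M_1/C_{opt}(r_0)\ge 2$ holds only when the slope of $\ell_2$ is at least $2$ (the paper takes $a=2$ exactly; for $a=1$ the ratio dips below $2$ for any positive postponement, so the instance would fail). Your choice $\alpha\to\infty$ happens to satisfy $\alpha\ge 2$, so your instance family does work, but nothing in your write-up identifies this constraint or performs the uniform-in-$M_1$ check, and without it the argument does not go through. A second, minor, issue: in regime~(b) the identity $f_1\hat{\ell}_1(f_1)+f_2\hat{\ell}_2(f_2)=r\hat{\ell}_1(f_1)$ presumes equal latencies at equilibrium, which a user equilibrium with discontinuous $\hat{\ell}_2$ need not satisfy exactly; this is harmless for the $rL$ bound (since $\hat{\ell}_2(f_2)\ge \ell_2(f_2)\ge b\approx L$) but should be stated via $\hat{C}_N(r)\ge r\,\ell_2(f_2)$ as the paper does.
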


\begin{proof}

Let $\bar{r}=1$ be the predicted input rate. We consider the network with two parallel links with latencies $\ell_1(x)=x$ and $\ell_2(x)=2x+2-3\delta$, for sufficiently small $\delta>0$. We calculate the optimal flow $\bar{f}^*$ for $\bar{r}$, by setting the marginal costs equal, i.e., $2\bar{f}_1^* = 4\bar{f}_2^* + 2-3\delta$, where $\bar{f}_1^*+\bar{f}_2^*=1$. This gives that $\bar{f}_1^*=1-\frac{\delta}{2}$, and $\bar{f}_2^*=1-\bar{f}_1^*=\frac{\delta}{2}$.

Note that $\max_{i\in\{1,2\}} \ell_i(\bar{f}_i^*)=\ell_2(\bar{f}_2^*)=2-2\delta$. Consider any consistent CM; by Lemma \ref{lemma:characterizationCons}, this mechanism has to modify the latencies such that 

$$\hat{\ell}_i(\bar{f_i^*})=\ell_i(\bar{f_i^*})\mbox{,\; for } i\in\{1,2\} \mbox{\quad and  \quad } \liminf_{\epsilon \to 0^+} \hat{\ell}_1(\bar{f}_1^* + \epsilon) \ge 2-2\delta\,,$$

Let $f$ be a Nash flow of the CM. We consider the input rate $r$ so that it is the minimum amount of flow needed such that $\hat{\ell}_1(f_1)=\hat{\ell}_2(f_2)$. This happens right at the time that in the Nash flow link $1$ receives flow more than $\bar f_1^*$. Therefore, it holds that $r = \bar{f_1^*}+f_2\approx 1 + f_2$. In what follows, we set $\delta=0$ as it doesn't play any role. 
Let $f^*$ be the optimal flow under $r$ with respect to the original latency functions. We calculate $f^*$ by setting the marginal costs equal, i.e., $2f_1^* = 4f_2^* + 2$, where $f_1^*+f_2^*=r = 1+f_2$. This results in $f_1^* = 1+\frac{2}{3}f_2$ and $f_2^* = r- f_1^* = \frac 13 f_2$. Then, we bound the ePoA$(\bar{r})$ as follows:

$$ePoA\geq ePoA(r,\bar{r}) =\frac{\hat{C}_N(r)}{C_{opt}(r)}=\frac{r \hat{\ell}_2(f_2)}{f_1^*\ell(f_1^*)+f_2^*\ell(f_2^*)}\geq \frac{r \ell_2(f_2)}{f_1^{*2}+2f_2^{*2} +2f_2^*}\,.$$

By replacing $r=1+f_2$, $\ell(f_2)=2f_2+2$, $f_1^*=1+\frac 23 f_2$ and $f_2^*= \frac 13 f_2$, we get:
$$ePoA\geq \frac{2(1+f_2)^2}{(1+\frac 23 f_2)^2+2(\frac 13 f_2)^2 +\frac 23 f_2}\geq 2\,,$$
where the last inequality holds because the expression attains a minimum of $2$ at $f_2 = 0$.\end{proof}

The value for the modified latencies that minimizes the ePoA is precisely the one selected by the \textsc{MinCharge} mechanism. Combining Theorem \ref{thm:Optimality} and Lemma \ref{lemma:LBAffine}, we obtain the following corollary.

\begin{corollary}
    The \textsc{MinCharge} CM is optimal with respect to robustness among all consistent CMs.
\end{corollary}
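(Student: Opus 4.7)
The plan is to observe that this corollary is an immediate consequence of the two preceding results and requires no new technical content. Theorem~\ref{thm:Optimality} provides the upper bound: the \textsc{MinCharge} mechanism is consistent and achieves a robustness factor of $2$. Lemma~\ref{lemma:LBAffine} provides the matching lower bound: no consistent CM can be $(2-\epsilon)$-robust for any $\epsilon>0$. Together, these sandwich the optimal robustness value at exactly $2$ within the class of consistent CMs, and \textsc{MinCharge} attains this value.

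Concretely, I would write a one-sentence argument. First, let $\rho^*$ denote the infimum of robustness factors over all consistent CMs. By Lemma~\ref{lemma:LBAffine}, $\rho^*\geq 2$, since otherwise there would exist a consistent CM whose robustness is strictly less than $2$, contradicting the lower bound construction with the two-link instance $\ell_1(x)=x$, $\ell_2(x)=2x+2-3\delta$. By Theorem~\ref{thm:Optimality}, the \textsc{MinCharge} mechanism is consistent and has ePoA at most $2$, so it witnesses $\rho^*\leq 2$. Combining the two yields $\rho^*=2$, and this optimum is attained by \textsc{MinCharge}.

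There is no genuine obstacle here; the only thing worth remarking on is that the notion of optimality being claimed is robustness restricted to the subclass of consistent CMs (one cannot hope for a consistent CM beating the $4/3$ Pigou bound when the prediction is completely ignored, and indeed the robustness lower bound of $2$ is larger than the unaugmented PoA precisely because consistency forces the mechanism to commit strongly to $\bar{f}^*$). The proposal, therefore, is to present the corollary's proof as a two-line deduction invoking Theorem~\ref{thm:Optimality} for the upper bound and Lemma~\ref{lemma:LBAffine} for the matching lower bound.
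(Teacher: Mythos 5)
Your proposal is correct and matches the paper exactly: the paper derives this corollary as an immediate combination of Theorem~\ref{thm:Optimality} (the $2$-robust upper bound attained by \textsc{MinCharge}) and Lemma~\ref{lemma:LBAffine} (the lower bound showing every consistent CM is at least $2$-robust). No further argument is needed.
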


\section{Smoothness and Error-Tolerant Design}\label{sec:smoothness}

In the learning-augmented framework, one seeks algorithms that perform optimally when supplied with accurate predictions, yet retain provable guarantees when predictions are imperfect. The notion of smoothness captures how the performance degrades as a function of the prediction error. In this section, we extend our CMs to be \emph{error-tolerant}, i.e., mechanisms whose approximation guarantee deteriorates gracefully as a function of the prediction error, up to a specified threshold. We derive a characterization of all consistent error-tolerant mechanisms (Lemma \ref{lemma:characterizationConsSmooth}). We then provide an approximation guarantee as a function of the prediction error, whenever the error is upper bounded by the error-tolerance threshold, and we show that this is tight for two links. Additionally, we establish a robustness bound in terms of $\bar\eta$ that the mechanism satisfies for arbitrary input rate. Combining the above, we derive our main result regarding smoothness in Theorem \ref{thm:errorTolerance}. 

\subsection{Characterization of Error-Tolerant Mechanisms}

We provide a full characterization of error-tolerant consistent CMs, extending the result of Lemma \ref{lemma:characterizationCons}. To achieve smoothness for an input rate in a restricted range around the predicted input rate $\bar{r}$, we require that the flow is routed exclusively through links without discontinuities. This goes in addition to the consistency of the CM, meaning that the two conditions of Lemma \ref{lemma:characterizationCons} are still necessary. Therefore, if the optimal flow for input rate $\bar{r}$ uses the first $k(\bar r)$ links\footnote{We refer the reader to  Lemma~\ref{auxLem1} for clarifications on which links are used in the optimal flow.}, the $k(\bar r)-1$ links come with discontinuities for more than the optimal flow, meaning that the extra input rate to $\bar{r}$ should be routed along the links from $k(\bar r)$ to $m$.  

\begin{lemma} [Characterization lemma] 
\label{lemma:characterizationConsSmooth}
    Given any latency functions $(\ell_i)_{i \in M}$, a predicted input rate $\bar{r}$, an error-tolerance threshold $\bar{\eta}$, and any CM, $(\hat{\ell}_i)_{i \in M}$, let $\bar{f}^* = (\bar{f}_i^*)_{i \in M}$ be the optimal flow for input rate $\bar{r}$. Suppose that $\bar{f}^{\bar\eta}$ is the Nash flow of the input rate $\bar{f}_{k(\bar{r})}^*+\bar\eta$ by considering the network with links $k(\bar r)$ through $m$ with the modified latency functions $(\hat{\ell}_i)_{i \in \{k(\bar r),\ldots m\}}$. We define as $\hat L_{\bar\eta}$ the maximum latency of any used link in $\bar{f}^{\bar\eta}$, and we further define $\bar{f}^{-\bar\eta}$ to be the Nash flow of the CM for input rate $\max\{\bar{r}-\bar\eta,0\}$. Then, the CM is consistent and error-tolerant, if and only if:

    \begin{itemize}
        \item for every $i\leq k(\bar r)$,  $\hat{\ell}_i(\bar{f}_i^*) = \ell_i(\bar{f}_i^*)$, and
        \item for every $i < k(\bar r)$, $\liminf_{\epsilon \to 0^+} \hat{\ell}_i(\bar{f}_i^* + \epsilon) \ge \hat L_{\bar\eta}$, and 
        \item for every $i\leq k(\bar r)$, $\hat{\ell}_i(x)$ is continuous for $x\in[\bar{f}^{-\bar\eta}_i, \bar{f}_i^*]$, and for every $i\geq k$, $\hat{\ell}_i(x)$ is continuous for $x\in [\bar{f}_i^*, \bar{f}^{\bar\eta}_i]$.  
    \end{itemize}    
\end{lemma}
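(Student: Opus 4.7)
The plan is to prove the two directions of the characterization separately, leveraging the simpler characterization of Lemma~\ref{lemma:characterizationCons} and extending it with the new error-tolerance requirement. I would first record the useful fact that $\hat L_{\bar\eta}\geq L$: since $\bar{f}^{\bar\eta}$ routes input rate $\bar{f}^*_{k(\bar r)}+\bar\eta\geq \bar{f}^*_{k(\bar r)}$ through the subnetwork on links $k(\bar r),\ldots,m$, monotonicity of Nash loads in the input rate (\cite{CominettiDS24}) gives that its latency on link $k(\bar r)$ is at least $\ell_{k(\bar r)}(\bar{f}^*_{k(\bar r)})=L$. Thus conditions~1 and 2 strictly strengthen those of Lemma~\ref{lemma:characterizationCons}, and consistency follows automatically from them.

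For sufficiency of error-tolerance, I would analyze the Nash flow $f$ at rate $r$ for $r\in[\max\{\bar r-\bar\eta,0\},\bar r+\bar\eta]$ in two regimes. If $r=\bar r+t$ with $t\in[0,\bar\eta]$, condition~2 prevents any link $i<k(\bar r)$ from carrying flow strictly above $\bar{f}^*_i$: its latency would then be at least $\hat L_{\bar\eta}$, while by monotonicity no link $j\geq k(\bar r)$ can carry more flow than its share in $\bar{f}^{\bar\eta}$, so its latency is at most $\hat L_{\bar\eta}$, and any strict excess on $i$ would violate the Nash condition. Hence the excess rate $t$ is routed entirely through links $k(\bar r),\ldots,m$, and the restriction of $f$ to this subnetwork coincides with the Nash flow of rate $\bar{f}^*_{k(\bar r)}+t$. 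Condition~3 then ensures that the link-wise latencies, and hence $\hat C_N(r,\bar r)$, vary continuously with $t$; together with continuity of $C_{opt}(r)$ this yields continuity of ePoA$(r,\bar r)$. The case $r=\bar r-t$ is symmetric: by the overprediction result (\cite{algorithmica/ChristodoulouMP14, CominettiDS24}) all loads decrease, and condition~3 applied on $[\bar{f}^{-\bar\eta}_i,\bar{f}^*_i]$ for $i\leq k(\bar r)$ supplies the required continuity.

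For necessity, consistency via Lemma~\ref{lemma:characterizationCons} already gives condition~1 and a weaker form of condition~2 with threshold $L$. To upgrade the threshold to $\hat L_{\bar\eta}$, I would argue by contradiction: if some $i<k(\bar r)$ had $\liminf_{\epsilon\to 0^+}\hat{\ell}_i(\bar{f}^*_i+\epsilon)=L'<\hat L_{\bar\eta}$, then for some $t'\in(0,\bar\eta]$ the Nash flow at rate $\bar r+t'$ would either push strictly positive extra flow onto $i$ or exhibit a discontinuous jump in link loads at the first such $t'$, each producing a discontinuity of $\hat C_N(\cdot,\bar r)$ and contradicting error-tolerance. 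Condition~3 is necessary by an analogous mechanism: any jump of $\hat{\ell}_i$ at some $x^\circ$ inside the prescribed range creates an interval of input rates where the equilibrium load on $i$ snaps across $x^\circ$, again producing a discontinuity in $\hat C_N(\cdot,\bar r)$.

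The main obstacle is the careful bookkeeping of how the Nash flow evolves with $r$ across the junction $r=\bar r$: one must simultaneously exploit the monotonicity of Nash loads in the input rate, the structural Lemma~\ref{auxLem1} identifying which links are used in the optimum, and the fact that any upward jump in a modified latency on a link that transitions from unused to used (as $r$ varies) manifests as a detectable jump in the social cost. The delicate point is ruling out ``hidden'' equilibria that might mask such a jump, which I would address by invoking uniqueness of the Nash flow in link loads under strictly increasing modified latencies, together with the lower-semicontinuity setup from the preliminaries.
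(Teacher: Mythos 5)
Your proposal is correct and follows essentially the same route as the paper: establish $\hat L_{\bar\eta}\ge L$ so that conditions 1--2 subsume Lemma~\ref{lemma:characterizationCons}, prove error-tolerance by showing no link $i<k(\bar r)$ can exceed $\bar f_i^*$ via a Nash-deviation argument, and obtain necessity of condition 2 by contradiction at rate $\bar r+\bar\eta$. The only substantive difference is that where you invoke monotonicity of Nash loads to argue no link $j\ge k(\bar r)$ exceeds its share in $\bar f^{\bar\eta}$ (which is mildly circular, since it presupposes the bound on links $i<k(\bar r)$ you are proving), the paper uses a simple counting argument: since $r\le\sum_{i<k}\bar f_i^*+\sum_{j\ge k}\bar f_j^{\bar\eta}$, any link above its cap forces some other link below its cap, and that link's latency is below $\hat L_{\bar\eta}$, contradicting equilibrium.
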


\begin{proof}
    %consistency
    For simplicity let $k=k(\bar r)$. We first show that any CM satisfying the conditions of the statement is consistent. Since links $k+1$ to $m$ have higher $b$-values than $\ell_{k}(\bar{f}_{k}^*)$, otherwise they would be assigned a positive flow, it follows that $\hat L_{\bar{\eta}} \ge L$ ($L$ as defined in Lemma \ref{lemma:characterizationCons}). Therefore, any CM meeting the conditions of the lemma also satisfies the conditions of Lemma \ref{lemma:characterizationCons} and is consequently consistent.  
    Next, we prove that any CM satisfying the conditions of the lemma is error-tolerant. We show that for any $r \in [\max\{\bar{r}-\bar\eta,0\}, \bar{r} +\bar{\eta}]$, for any Nash flow $f$, no discontinuity appears, i.e., $f_i\leq \bar f_i^*$ for $i <k $. Suppose on the contrary that there exists some $i<k$ with $f_i > \bar f_i^*$, meaning that $\hat\ell_i(f_i)\geq \hat L_{\bar\eta}$. Then, there exists some link $j<k$ with $f_j<\bar f_j^*$ or some link $j\geq k$ with $f_j< \bar f^{\bar\eta}_j$. In any case, $\hat\ell_j(f_j)<\hat L_{\bar\eta}$, meaning that $f$ is not a Nash flow.

    We now proceed to show that any CM that is consistent and error-tolerant satisfies the conditions of the lemma. The first condition is necessary for consistency similarly to the proof of Lemma~\ref{lemma:characterizationCons}. 
    Regarding the second condition, suppose for the sake of contradiction that there exists a link $i<k$ such that $\liminf_{\epsilon\rightarrow0^+}{\hat{\ell}_i(\bar{f}_i^*+\epsilon)} < \hat L_{\bar{\eta}}$. We show that for the input rate $r=\bar{r}+\bar{\eta}$, any Nash flow $f$ has $f_i>\bar f_i^*$, which would violate smoothness. Suppose for the sake of contradiction that $f$ is a Nash flow that satisfies smoothness, which means that the aggregated link flow for the first $k-1$ links is at most $\bar{r}-\bar{f_k^*}$. Therefore, the rest of the flow, which is at least $\bar{f_k^*}+\bar\eta$ is routed via links $k$ though $m$ under $f$. Based on the definition of $\hat L_{\bar\eta}$, there exists at least one link among them with positive flow and latency at least $\hat L_{\bar\eta}$. Since we assumed $\liminf_{\epsilon\rightarrow0^+}{\hat{\ell}_i(\bar{f}_i^*+\epsilon)} < \hat L_{\bar{\eta}}$ for some $i<k$, $f$ cannot be a Nash flow. Therefore, the second condition of the lemma's statement is also necessary. The third condition is trivially necessary for error-tolerance.
\end{proof}

\subsection{\textsc{ErrorTolerant} Mechanism} 

We extend the \textsc{MinCharge} mechanism to achieve error-tolerance, by introducing a new mechanism, which we refer to as the \textsc{ErrorTolerant} mechanism. This mechanism can be viewed as the direct analog of the \textsc{MinCharge}, adapted to account for a bounded prediction error. Given an error-tolerance parameter $\bar{\eta}$, the CM sets a high latency for flow greater than the optimal flow under $\bar{r}$ to steer any additional flow towards links $k(\bar r)$ through $m$, ensuring that no link incurs a steep increase in latencies. Given an error-tolerance parameter $\bar{\eta}$, we instantiate the constant mechanism by setting $c = L_{\bar{\eta}}$ which is the minimum latency on links $k(\bar r)$ through $m$ when the additional flow $\bar{\eta}$ is routed through them.

\begin{definition}[\textsc{ErrorTolerant} Mechanism]
    Given any latency functions $(\ell_i)_{i \in M}$, a predicted input rate $\bar{r}$, and an error-tolerance threshold $\bar{\eta}$, let $\bar{f}^* = (\bar{f}_i^*)_{i \in M}$ be the optimal flow for $\bar{r}$. Suppose that $\bar{f}^{\bar\eta}$ is the Nash flow of the input rate $\bar{f}_{k(\bar r)}^* +\bar\eta$ by considering the network with links $k(\bar r)$ through $m$, with $(\ell_i)_{i \in \{k(\bar r),\ldots m\}}$. We define as $L_{\bar\eta}$ the latency of any used link in $\bar{f}^{\bar\eta}$. The \textsc{ErrorTolerant} mechanism is defined for any $i\geq k(\bar r)$ as $\hat{\ell}_i(x)=\ell(x), \forall x$, and for any $i< k(\bar r)$ as:

$$
\hat{\ell}_i(x)=
\begin{cases}
L_{\bar{\eta}}+\frac{\varepsilon}{f^{L_{\bar{\eta}}}_i-\bar{f}^*_i} (x-\bar{f}^*_i), \quad \bar{f}^*_i < x < f^{L_{\bar{\eta}}}_i \\ 
\ell_i(x), \quad \text{otherwise}
\end{cases}
$$
where $\varepsilon>0$ is an arbitrarily small value, and $f^{L_{\bar{\eta}}}_i$ is such that $\ell_i(f^{L_{\bar{\eta}}}_i) = L_{\bar{\eta}}+\varepsilon$.
\end{definition}

We then proceed to bound the approximation guarantee of the \textsc{ErrorTolerant} CM as a function of the prediction error. In fact, we show that the analysis of the \textsc{ErrorTolerant} approximation guarantee is tight for any $\eta \leq \bar{\eta}$, i.e., for $ r \in [\bar{r}, \bar{r}+\bar{\eta}]$ in the case of two links. We generalize our results to networks with any number of links, deriving upper bounds on both the approximation and robustness guarantees.

\subsection{Two Links} 
We begin with the case of two links. We establish a tight bound on the approximation guarantee when $\eta \leq \bar{\eta}$, and a tight bound on the robustness guarantee when $\eta > \bar{\eta}$ that the mechanism satisfies at all times. We first establish that, when considering the Nash and the optimal flow over two links, it is without loss of generality to set the latency of the first link as $\ell_1(x)=x$, and that of the second link as $\ell_2(x)=ax+b$, for some constants $a,b \ge 0$. Since only these two flows are relevant in the analysis of the PoA, it is without loss of generality to focus on such latencies.

\begin{lemma}
\label{lemma:Wlog_l1=x}
Consider any network $G$ of two links with latencies $\ell_1(x)=a_1x+b_1$ and $\ell_2(x)=a_2x+b_2$, and the network $G'$ of two links with latencies $\ell'_1(x)=x$ and $\ell'_2(x)=ax+b$, with $a=\frac{a_2}{a_1}$ and $b=\frac{b_2-b_1}{a_1}$. Then $f$ is a Nash (and respectively optimal) flow of $G$ if and only if it is a Nash (and respectively optimal) flow of $G'$.
\end{lemma}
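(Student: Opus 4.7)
The plan is to verify the claim directly by matching the characterizing (in)equalities of Nash and optimal flows in $G$ and $G'$ side by side for any input rate $r$. Under the paper's standing assumption that at most one link carries zero slope, I may assume $a_1>0$ (relabeling if necessary so that the constant link, if present, is link $2$), which makes the transformation $a=a_2/a_1$, $b=(b_2-b_1)/a_1$ well-defined; the sorting convention $b_1<b_2$ then also guarantees $b>0$.

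First I would treat the Nash condition. A feasible split $(f_1,f_2)$ with $f_1,f_2>0$ is Nash in $G$ iff $\ell_1(f_1)=\ell_2(f_2)$, i.e.\ $a_1f_1+b_1=a_2f_2+b_2$. Dividing by $a_1$ yields $f_1=af_2+b$, which is exactly $\ell'_1(f_1)=\ell'_2(f_2)$, the Nash condition in $G'$. For boundary flows with $f_2=0$ and $f_1=r$, the condition $\ell_1(r)\le\ell_2(0)$ in $G$ reduces to $r\le b$, matching $\ell'_1(r)\le\ell'_2(0)$ in $G'$. The symmetric case $f_1=0$ is vacuous in both networks because $b_1<b_2$ and $b>0$ make the corresponding inequality infeasible.

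Second I would treat optimality via the equal-marginals characterization. A flow with both components positive is optimal iff $\tfrac{d}{df_i}(f_i\ell_i(f_i))$ coincides on the two used links. In $G$ this reads $2a_1f_1+b_1=2a_2f_2+b_2$; dividing by $a_1$ gives $2f_1=2af_2+b$, which is precisely the analogous condition in $G'$ (since $\ell'_1(x)=x$ has marginal $2x$ and $\ell'_2(x)=ax+b$ has marginal $2ax+b$). The boundary case $f_2=0$ corresponds in $G$ to $2a_1r+b_1\le b_2$, equivalent after division by $a_1$ to $2r\le b$, the matching condition in $G'$; the case $f_1=0$ is again vacuous.

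The calculations are routine and I do not anticipate a substantive obstacle; the only subtle points are the boundary regimes where a link is unused and the degenerate case $a_1=0$, which is excluded by the relabeling convention. Since each equivalence was obtained by multiplying/dividing by the positive constant $a_1$, the implications run in both directions and the stated biconditional follows for both Nash and optimal flows.
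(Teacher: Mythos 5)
Your proof is correct and follows essentially the same route as the paper's: both arguments divide the equal-latency (Nash) and equal-marginal-cost (optimality) conditions by the positive constant $a_1$ to pass between $G$ and $G'$. The only difference is that you additionally spell out the boundary regimes where one link is unused and the degenerate case $a_1=0$, which the paper's proof leaves implicit; this is a harmless (indeed slightly more careful) elaboration rather than a different approach.
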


\begin{proof}
    If $f$ is a Nash flow of $G$, then $\ell_1(f_1)=\ell_2(f_2)$, i.e., $a_1f_1+b_1=a_2f_2+b_2$, and so $\ell'_1(f_1)=f_1=\frac{a_2}{a_1}f_2+\frac{b_2-b_1}{a_1}=af_2+b=\ell'_2(f_2)$, which means that $f$ is a Nash flow in $G'$. The other direction can be shown similarly, so $f$ is a Nash flow of $G$ if and only if it is a Nash flow in $G'$.

    If $f$ is an optimal flow of $G$, then the marginal costs of the two links should be equal, i.e., $2a_1f_1+b_1=2a_2f_2+b_2$, and so $2f_1=\frac{2a_2}{a_1}f_2+\frac{b_2-b_1}{a_1}=2af_2+b$ meaning that the marginal costs of the two link of $G'$ are equal under $f$ and therefore $f$ is an optimal flow in $G'$. The other direction can be shown similarly, so $f$ is an optimal flow of $G$ if and only if it is an optimal flow of $G'$.
\end{proof}

\begin{lemma}
\label{lemma:PoAoriginal}
    For any given network $G$ with two links $\ell_1(x)=x$ and $\ell_2(x)=ax+b$, the PoA$(r)$ is given by:
    \begin{itemize}
        \item for $r<b/2, PoA(r) = 1 \,,$
        \item for $b/2\le r<b$, $$PoA(r) = \frac{(a+1)r^2}{(a+1)r^2-(r-\frac{b}{2})^2}\,,$$
        \item for $r \ge b$, $$PoA(r) = \frac{(ar+b)r}{(a+1)r^2-(r-\frac{b}{2})^2}\,.$$
    \end{itemize}
\end{lemma}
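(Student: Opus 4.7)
The plan is to compute $C_N(r)$ and $C_{opt}(r)$ explicitly and then take the ratio. The three regimes in the statement correspond to the two natural thresholds at which link~$2$ enters the respective flows: $r=b/2$ for the optimal flow (equalized marginal costs $2f_1=2af_2+b$ at $f_2=0$) and $r=b$ for the Nash flow (equalized latencies $f_1=af_2+b$ at $f_2=0$).

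First I would compute the Nash flow. For $r\le b$, routing all traffic on link~$1$ satisfies $\ell_1(r)=r\le b=\ell_2(0)$, so $f_1=r$, $f_2=0$, and $C_N(r)=r^2$. For $r\ge b$, solving $f_1=af_2+b$ together with $f_1+f_2=r$ yields $f_2=(r-b)/(a+1)$ and common latency $(ar+b)/(a+1)$, so $C_N(r)=r(ar+b)/(a+1)$.

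Next I would compute the optimal flow via the marginal-cost condition $2f_1=2af_2+b$. For $r\le b/2$, the optimum also places all traffic on link~$1$, so $C_{opt}(r)=r^2=C_N(r)$ and PoA$(r)=1$, which is the first case of the lemma. For $r\ge b/2$ I obtain $f_2=(r-b/2)/(a+1)$ and $f_1=af_2+b/2=(ar+b/2)/(a+1)$.

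The only non-trivial step is obtaining a clean formula for $C_{opt}(r)$ in the mixed regime. Writing $C_{opt}(r)=f_1^2+af_2^2+bf_2$ and $r^2=(f_1+f_2)^2=f_1^2+2f_1f_2+f_2^2$, and substituting the identity $f_1=af_2+b/2$ coming from the marginal-cost equality, the intermediate linear terms cancel and one obtains
$$r^2-C_{opt}(r)=2f_1f_2+f_2^2-af_2^2-bf_2=(a+1)f_2^2=\frac{(r-b/2)^2}{a+1}.$$
Hence $C_{opt}(r)=\bigl((a+1)r^2-(r-b/2)^2\bigr)/(a+1)$. Substituting this expression into $\mathrm{PoA}(r)=C_N(r)/C_{opt}(r)$, together with the two cases for $C_N(r)$, immediately delivers the formulas in the second regime ($b/2\le r<b$, where $C_N(r)=r^2$) and the third regime ($r\ge b$, where $C_N(r)=r(ar+b)/(a+1)$). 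I do not foresee any conceptual obstacle; the only step requiring any care is the algebraic collapse $r^2-C_{opt}(r)=(a+1)f_2^2$, which hinges on using $f_1=af_2+b/2$ to eliminate the cross terms.
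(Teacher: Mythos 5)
Your proposal is correct and follows essentially the same route as the paper: identify the thresholds $b/2$ and $b$ at which link~$2$ enters the optimal and Nash flows respectively, compute both flows explicitly, and take the ratio, arriving at the same expression $C_{opt}(r)=r^2-\frac{(r-b/2)^2}{a+1}$. Your derivation of that formula via the identity $r^2-C_{opt}(r)=(a+1)f_2^2$ is a slightly cleaner piece of algebra than the paper's direct expansion, but it is not a different argument.
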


\begin{proof}
    For $r \in [0, b/2)$, the Nash and optimal flows use only the first link \cite{algorithmica/ChristodoulouMP14}, thus $PoA(r)=1$. 
    
    For $r \in [b/2,b)$, the optimal flow uses both links whereas Nash flow uses only the first link \cite{algorithmica/ChristodoulouMP14}. Hence, the cost at Nash flow is $C_N(r)=r^2$. To compute the cost of the optimal flow we use the fact that that the marginal costs are equal in the two links, i.e. $2f_1^* = 2af_2^*+b$. Together with $f_1^*+f_2^*=r$, we get that $f_1^*=\frac{2ar+b}{2(a+1)}$ and $f_2^*=\frac{2r-b}{2(a+1)}.$ Hence:

\begin{align*}
    C_{opt}(r,\bar r) & = {f_1^*}^2+a{f_2^*}^2+bf_2^* \\
    & = \frac{a^2r^2+\frac{b^2}{4}+arb+ar^2+a\frac{b^2}{4}-arb}{(a+1)^2}+\frac{br-\frac{b^2}{2}}{a+1} \\
    & = \frac{ar^2(a+1)+\frac{b^2}{4}(a+1)}{(a+1)^2}+\frac{br-\frac{b^2}{2}}{a+1} \\
    & = \frac{ar^2+br-\frac{b^2}{4}}{a+1} \\
    & = \frac{(a+1)r^2-\left(r^2-br+({\frac{b}{2}})^2\right)}{a+1} \\
    & = r^2-\frac{{\left(r-\frac{b}{2}\right)}^2}{a+1}\,.
\end{align*}
    
    For $r \ge b$, both the Nash and optimal flows use both links. In the Nash flow, latencies in the two links are equal and so $f_1=af_2+b$. Using $f_1+f_2=r$ we derive that $f_2=\frac{r-b}{a+1}$. The entire flow experiences the same cost, thus the total cost incurred in the Nash equilibrium is $C_N(r)=r(af_2+b)=r(a\frac{r-b}{a+1}+b) = r\frac{ar+b}{a+1}.$ Using the expression for the optimal cost computed above, the PoA follows.
    
\end{proof}
\begin{corollary}\label{cor:PoA43}
    For any given network $G$ with two links $\ell_1(x)=x$ and $\ell_2(x)=ax+b$, the PoA is given by:
    $$PoA=1+\frac 1{4a+3}\leq \frac 4 3\,.$$
\end{corollary}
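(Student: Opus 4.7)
The plan is to combine the three case-wise formulas from Lemma~\ref{lemma:PoAoriginal} and maximize over $r>0$. Case~1 ($r<b/2$) yields $PoA(r)=1$, so the supremum over the other two regimes dominates. My approach is to show that $PoA(r)$ is nondecreasing on $[b/2,b)$ and nonincreasing on $[b,\infty)$, so the maximum is attained (as a limit / value) at $r=b$, and to then evaluate both expressions at $r=b$ to confirm they agree and equal $1+\tfrac{1}{4a+3}$.

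For the regime $r\in[b/2,b)$, I would rewrite the formula as
\[
PoA(r)=\frac{1}{1-\dfrac{(r-b/2)^2}{(a+1)r^2}} = \frac{1}{1-\dfrac{(1-b/(2r))^2}{a+1}}.
\]
The inner quantity $(1-b/(2r))^2$ is clearly increasing in $r$ on $(b/2,\infty)$, so $PoA(r)$ is strictly increasing on $[b/2,b)$ and its supremum is the limit as $r\uparrow b$. For the regime $r\ge b$, I expand the denominator as $(a+1)r^2-(r-b/2)^2 = ar^2+br-b^2/4$ and the numerator as $(ar+b)r=ar^2+br$, giving
\[
PoA(r)=1+\frac{b^2/4}{ar^2+br-b^2/4}.
\]
Since $ar^2+br-b^2/4$ has derivative $2ar+b>0$ for $r\ge b>0$, this expression is strictly decreasing in $r$, with maximum at $r=b$.

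Plugging $r=b$ into either formula gives $(ar+b)r=(a+1)b^2$ and $(a+1)r^2-(r-b/2)^2=(a+1)b^2-b^2/4=\tfrac{b^2(4a+3)}{4}$, so
\[
PoA(b)=\frac{4(a+1)}{4a+3}=1+\frac{1}{4a+3},
\]
confirming continuity across the two regimes at $r=b$ and producing the claimed PoA value. Finally, since $a\ge 0$ we have $4a+3\ge 3$, and thus $PoA = 1+\tfrac{1}{4a+3}\le 1+\tfrac{1}{3}=\tfrac{4}{3}$, matching Pigou's bound at $a=0$. There is no real obstacle here; the argument is a routine monotonicity analysis, with the only mild subtlety being the verification that the two closed-form expressions for $PoA(r)$ agree at the boundary $r=b$ so that the sup is genuinely attained.
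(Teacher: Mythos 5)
Your proposal is correct and follows essentially the same route as the paper: establish that $PoA(r)$ increases on $[b/2,b)$ and decreases on $[b,\infty)$, evaluate at $r=b$ to get $1+\frac{1}{4a+3}$, and conclude $PoA\le 4/3$ since $a\ge 0$. You merely spell out the monotonicity computations that the paper dismisses as ``easy to verify,'' and your boundary check at $r=b$ is a nice (if minor) addition.
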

\begin{proof}
    It is easy to verify that the PoA$(r)$ as given in Lemma~\ref{lemma:PoAoriginal} is increasing for $r\leq b$, attains its maximum value at $r=b$, and is decreasing for $r \ge b$. By setting $r=b$, we get the maximum PoA over all $r$ as  $PoA=1+\frac 1{4a+3}$. This is a decreasing function of $a$, therefore it gets its maximum for $a=0$, giving $PoA\leq 4/3$, as it is well known from the literature \cite{RoughgardenT02}.
\end{proof}
\begin{lemma}
\label{lemma:ErrorTolerantApxEta2}
    Consider any network $G$ of two links with $\ell_1(x)=x$, and $\ell_2(x)=ax+b$, for some constants $a,b \ge 0$. Given a predicted input rate $\bar r$ and an optimal flow $\bar f^*$ under $\bar r$, let $r'$ be the largest input rate such that the Nash flow on link 1 is less than $\bar f_1^*$, and $r''=f_1^{L_{\bar\eta}}+f_2^{L_{\bar\eta}}$ (where $f_i^{L_{\bar\eta}}$ is as defined in the \textsc{ErrorTolerant} mechanism). Given an error-tolerance threshold $\bar\eta$, for any input rate $r$, let the actual error $\eta=r-\bar r$ (which we allow to be negative), with $|\eta|\le\bar\eta$. Then, the \textsc{ErrorTolerant} mechanism achieves an approximation guarantee: 

    \begin{itemize}
        \item for $r\leq r'$ and $r \geq r''$, $ePoA(r,\bar r)=PoA(r)$ as in Lemma~\ref{lemma:PoAoriginal},
        \item for $r \in [r', \bar{r}+\bar\eta]$, 
        $$ePoA(r,\bar r) = 1 + a - \frac{ab(a+1)(\frac{b}{4}+\eta)}{a\eta^2+(a+1)b\eta+\frac{(a+1)b^2}{4}} \,,$$
        \item for $r \in [\bar{r}+\bar\eta, r'']$, $$ePoA(r,\bar r)= \frac{r\left(a\left(\frac{{\left(\bar r-\frac{b}{2}\right)}}{a+1}+\bar\eta\right)+b\right)}{r^2-\frac{{\left(r-\frac{b}{2}\right)}^2}{a+1}} \,.$$ 
    \end{itemize}
\end{lemma}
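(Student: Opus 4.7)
The plan is to analyze the Nash flow induced by the \textsc{ErrorTolerant} mechanism in each of the three regimes of $r$ delimited by $r'$ and $r''$, and to divide the resulting Nash cost by $C_{opt}(r)$ from Lemma~\ref{lemma:PoAoriginal}. For the outer regime $r\le r'$ or $r\ge r''$, the CM-induced Nash flow coincides with the original (unmodified) Nash flow: by definition of $r'$, below $r'$ the discontinuity of $\hat\ell_1$ is never reached, so $\hat\ell_1(f_1)=\ell_1(f_1)$ along the Nash flow; and $r''=f_1^{L_{\bar\eta}}+f_2^{L_{\bar\eta}}$ is chosen so that for $r\ge r''$ link~1's Nash flow lies past the plateau, where again $\hat\ell_1=\ell_1$. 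In both cases $\hat C_N(r)=C_N(r)$ and ePoA$(r,\bar r)=$PoA$(r)$ as in Lemma~\ref{lemma:PoAoriginal}.

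For the first intermediate regime $r\in[r',\bar r+\bar\eta]$, I would show that the Nash flow is forced to the discontinuity, with $f_1=\bar f_1^*$ and $f_2=r-\bar f_1^*=\bar f_2^*+\eta$. The user-equilibrium condition for discontinuous latencies is verified via two inequalities: link-1 users do not benefit from switching since $\hat\ell_1(\bar f_1^*)=\bar f_1^*=a\bar f_2^*+b/2\le a\bar f_2^*+a\eta+b=\ell_2(f_2)$, invoking the first-order optimality relation $\bar f_1^*=a\bar f_2^*+b/2$; and link-2 users do not benefit since $\ell_2(f_2)\le\liminf_{\delta\to 0^+}\hat\ell_1(\bar f_1^*+\delta)=L_{\bar\eta}$, which reduces exactly to the hypothesis $\eta\le\bar\eta$. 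The resulting Nash cost is $\hat C_N(r)=(\bar f_1^*)^2+(\bar f_2^*+\eta)(a(\bar f_2^*+\eta)+b)$, and dividing by $C_{opt}(r)=r^2-(r-b/2)^2/(a+1)$ from Lemma~\ref{lemma:PoAoriginal} gives the claimed ratio.

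For the second intermediate regime $r\in[\bar r+\bar\eta,r'']$, link~1's flow enters the plateau of $\hat\ell_1$. Taking the $\varepsilon\to0$ limit, $\hat\ell_1\equiv L_{\bar\eta}$ on the whole plateau, so the Nash condition $\hat\ell_1(f_1)=\ell_2(f_2)$ forces $\ell_2(f_2)=L_{\bar\eta}$; hence $f_2=\bar f_2^*+\bar\eta$ and $f_1=r-\bar f_2^*-\bar\eta$. Every user then experiences latency $L_{\bar\eta}$, so $\hat C_N(r)=rL_{\bar\eta}=r(a(\bar f_2^*+\bar\eta)+b)$; substituting $\bar f_2^*=(\bar r-b/2)/(a+1)$ and dividing by $C_{opt}(r)$ from Lemma~\ref{lemma:PoAoriginal} recovers the stated expression.

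The main obstacle is the algebraic simplification in the first intermediate regime. A useful intermediate step is to establish the compact identity $\hat C_N(r)-C_{opt}(r)=\frac{a^2\eta^2}{a+1}$, obtained by direct expansion after plugging in $\bar f_1^*=a\bar f_2^*+b/2$ and $\bar f_2^*=(\bar r-b/2)/(a+1)$; the ratio then becomes $1+\frac{a^2\eta^2/(a+1)}{C_{opt}(r)}$, and the remaining task is to rewrite this in the closed form of the lemma. A secondary technical concern is correctness of the Nash characterization at the boundary rates $r'$, $\bar r+\bar\eta$, and $r''$, where the form of the Nash flow changes; the $\liminf$ form of the user-equilibrium condition applied in the $\varepsilon\to0$ limit handles all three boundaries uniformly.
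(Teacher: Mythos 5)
Your proposal follows essentially the same route as the paper: the same three-regime case analysis, the same identification of the Nash flow in each regime ($f_1=\bar f_1^*$ in the middle regime, the plateau value $L_{\bar\eta}$ on both links in the third), and the same key identity $\hat C_N(r)-C_{opt}(r)=\frac{a^2\eta^2}{a+1}$, which is exactly the paper's intermediate step $ePoA=1+\frac{\delta^2(a+1)}{C_{opt}(r)}$ with $\delta=\frac{a\eta}{a+1}$. Your explicit verification of the user-equilibrium conditions at the discontinuity is a welcome addition that the paper glosses over.

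The one place your plan would stall is the step you defer as ``the remaining task'': rewriting $1+\frac{a^2\eta^2/(a+1)}{C_{opt}(r)}$ into the closed form $1+a-\frac{ab(a+1)(\frac b4+\eta)}{a\eta^2+(a+1)b\eta+\frac{(a+1)b^2}{4}}$ is \emph{not} a pure algebraic identity for general $\bar r$. The paper gets there by lower-bounding the optimal cost, using $f_1^*\ge\frac{a\eta}{a+1}+\frac b2$ and $f_2^*\ge\frac{\eta}{a+1}$ to obtain $C_{opt}(r)\ge\frac{a\eta^2+b\eta+a\eta b}{a+1}+\frac{b^2}{4}$, which turns the expression into an upper bound; the closed form is attained with equality only in the extremal case $\bar r=b/2+\epsilon$ (where $\bar f_2^*=0$), which is how the paper justifies tightness. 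So you need to either carry out that bounding step or restrict to the worst-case $\bar r$; as written, ``direct expansion'' after substituting $\bar f_2^*=(\bar r-b/2)/(a+1)$ will leave a $\bar r$-dependent denominator that does not match the lemma's $\bar r$-free formula.
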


\begin{proof}
    {\bf Case (a)}: Note that for $r\leq r'$ and $r \geq r''$ the modifications of the cost functions do not affect the ePoA, therefore $ePoA(r,\bar r)=PoA(r)$.

    {\bf Case (b)}: It follows from the definition of our mechanism that there is enough capacity in both links such that any additional flow will not cause the social cost to increase abruptly. Define the difference between the optimal flow $f^*$ under the actual rate $r$ and the optimal flow $\bar f^*$ under the predicted rate $\bar{r}$ on link 1 as $\delta = f_1^* - \bar{f_1^*}$. It holds that $2f_1^*=2a(r-f_1^*)+b=2a(\bar r+\eta-f_1^*)+b$ and $2\bar f_1^*=2a(\bar r-\bar f_1^*)+b$, therefore, $\delta = \frac{a\eta}{a+1}.$ 

    Since $r \in [r', \bar r +\bar\eta]$, in the Nash flow $f_1=\bar f_1^*$, therefore,
    
    \begin{align*}
        ePoA(r, \bar r) & = \frac{\hat{C}_N(r)}{C_{opt}(r)} \\
        & =\frac{\bar{f}_1^*\hat{\ell}_1(\bar{f}_1^*)+(r-\bar{f}_1^*)\hat{\ell}_2(r-\bar{f}_1^*)}{C_{opt}(r)} \\
        & = \frac{(f_1^*-\delta)\ell_1(f_1^*-\delta)+(r-f_1^*+\delta)\ell_2(r-f_1^*+\delta)}{C_{opt}(r)} \\
        & = \frac{f_1^*\ell_1(f_1^*-\delta)-\delta\ell_1(f_1^*-\delta)+f_2^*\ell_2(r-f_1^*+\delta)+\delta\ell_2(r-f_1^*+\delta)}{C_{opt}(r)}.
    \end{align*}
    
Using the fact that the latencies satisfy $\ell_1(\bar{f}_1^*)=\ell_1(f_1^*-\delta)=\ell_1(f_1^*)-\delta$ and $\ell_2(r-\bar{f}_1^*)=\ell_2(r-f_1^*+\delta)=\ell_2(f_2^*)+a\delta$, we substitute these into the previous expression to obtain:

    \begin{align*}
        ePoA(r, \bar r) & = \frac{f_1^*(\ell_1(f_1^*)-\delta)-\delta(\ell_1(f_1^*)-\delta)+f_2^*(\ell_2(f_2^*)+a\delta)+\delta(l_2(f_2^*)+a\delta)}{C_{opt}(r)} \\
        & =  \frac{f_1^*\ell_1(f_1^*)+f_2^*\ell_2(f_2^*)+\delta(\ell_2(f_2^*)-\ell_1(f_1^*))+\delta(af_2^*-f_1^*)+\delta^2(a+1)}{C_{opt}(r)}\\
        & =  \frac{C_{opt}(r)+\delta(\ell_2(f_2^*)-\ell_1(f_1^*))+\delta(af_2^*-f_1^*)+\delta^2(a+1)}{C_{opt}(r)}.
    \end{align*}
    
Since the optimal flows $f^*$ satisfy $\ell_2(f_2^*)-\ell_1(f_1^*)=b/2$, we get:
    \begin{align*}
        ePoA(r) & = 1 + \frac{\delta b/2 + \delta b/2 - \delta b +\delta^2(a+1)}{C_{opt}(r)} \\
        & = 1 + \frac{\delta^2(a+1)}{C_{opt}(r)} \\
        & = 1 + \frac{a^2\eta^2}{(a+1)C_{opt}(r)}. 
    \end{align*}
    
We have already established that $f_1^*-\bar{f}_1^*=\frac{a\eta}{a+1}$. Moreover, we have noted that that the optimal and Nash flows begin to diverge once $\bar f_1^* \ge b/2$ (meaning that if $\bar f_1^* < b/2$ the mechanism would not make any changes). Combining these observations, we have that $f_1^* = \frac{a\eta}{a+1}+\bar{f}_1^*\ge\frac{a\eta}{a+1}+b/2$. Similarly, for the second link, we have $f_2^*-\bar{f}_2^*=\frac{\eta}{a+1}$, and hence, $f_2^* \ge \frac{\eta}{a+1}$. We proceed to derive a lower bound on the optimal cost $C_{opt}(r)$ using the aforementioned inequalities.

\begin{align*}
    C_{opt}(r) & = {f_1^*}^2+a{f_2^*}^2+bf_2^* \\
    & \ge \frac{a^2\eta^2}{(a+1)^2}+\frac{a\eta b}{a+1}+\frac{b^2}{4}+\frac{a\eta^2}{(a+1)^2}+\frac{b\eta}{a+1} \\
    & = \frac{a\eta^2+b\eta+a\eta b}{a+1}+\frac{b^2}{4}.
\end{align*}    

Substituting back, we have:

\begin{align*}
    ePoA(r) & \le 1+\frac{a^2\eta^2}{(a+1)\left(\frac{a\eta^2+b\eta+a\eta b}{a+1}+\frac{b^2}{4}\right)}\\
    & = 1+\frac{a^2\eta^2}{a\eta^2+b\eta+a\eta b+(a+1)\frac{b^2}{4}} \\
    & = 1+\frac{a^2\eta^2+a\eta b+a^2\eta b-a\eta b-a^2\eta b+a(a+1)\frac{b^2}{4}-a(a+1)\frac{b^2}{4}}{a\eta^2+b\eta+a\eta b+(a+1)\frac{b^2}{4}} \\
    & = 1 + a - \frac{ab(a+1)(\frac{b}{4}+\eta)}{a\eta^2+(a+1)b\eta+\frac{(a+1)b^2}{4}}
\end{align*}
For $\eta=0$, we get that $ePoA(r)=1$, while for $\eta \rightarrow \infty$, $ePoA(r)=1+a$.

This is tight because there exists an input rate $\bar{r}$ for which the inequalities are equalities. For $\bar{r}=b/2+\epsilon$ for $\epsilon$ arbitrarily small, $\bar{f_2^*}=0$, and therefore, the analysis is tight.

{\bf Case (c): } 
In this case the cost of the Nash flow is approximately $\hat{C}_N(r,\bar r)=r\hat{L}_{\bar{\eta}}=r(a(\bar f_2^*+\bar\eta)+b)$ based on the definition of the \textsc{ErrorTolerant} mechanism. Since $\bar f^*$ is the optimal flow for $\bar r$, it holds that $2(\bar r - \bar f_2^*)=2a\bar f_2^*+b$ and so 
$$\hat{C}_N(r,\bar r)=r\left(a\left(\frac{{\left(\bar r-\frac{b}{2}\right)}}{a+1}+\bar\eta\right)+b\right)\,.$$

Similarly to case (b) in Lemma \ref{lemma:PoAoriginal}, the cost of the optimum flow is $C_{opt}(r) = r^2-\frac{{\left(r-\frac{b}{2}\right)}^2}{a+1}$\,. Then, we derive,

\begin{align*}
    ePoA(r,\bar r) & = \frac{\hat{C}_N(r)}{C_{opt}(r)} = \frac{r\left(a\left(\frac{{\left(\bar r-\frac{b}{2}\right)}}{a+1}+\bar\eta\right)+b\right)}{r^2-\frac{{\left(r-\frac{b}{2}\right)}^2}{a+1}} 
\end{align*}
\end{proof}

\begin{lemma}
\label{lemma:rob2}
    Consider any network $G$ of two links with $\ell_1(x)=x$, and $\ell_2(x)=ax+b$, for some constants $a,b \ge 0$. Given any predicted input rate $\bar r$ and any error-tolerance threshold $\bar{\eta}$, the \textsc{ErrorTolerant} mechanism achieves the following robustness guarantee:

  $$ePoA = \max\{2, 1+a\}\,.$$

\end{lemma}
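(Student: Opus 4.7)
The plan is to partition the range of the actual input rate $r$ according to the three regimes of Lemma~\ref{lemma:ErrorTolerantApxEta2} and bound the $ePoA$ in each, then address tightness.

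For Case (a) (when $r\le r'$ or $r\ge r''$, including the overprediction regime $r<\bar r$ which is handled exactly as in Case~1 of Theorem~\ref{thm:Optimality}), the mechanism's modifications are inactive at equilibrium, so $ePoA(r,\bar r)=PoA(r)\le 4/3$ by Corollary~\ref{cor:PoA43}, which is at most $\max\{2,1+a\}$. For Case (b) (when $r\in[r',\bar r+\bar\eta]$ with $\eta=r-\bar r\ge 0$), the explicit formula from Lemma~\ref{lemma:ErrorTolerantApxEta2} is
\[
ePoA(r,\bar r)=1+a-\frac{ab(a+1)\bigl(\tfrac b4+\eta\bigr)}{a\eta^2+(a+1)b\eta+\tfrac{(a+1)b^2}{4}},
\]
whose subtrahend is non-negative, giving $ePoA\le 1+a\le\max\{2,1+a\}$.

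For Case (c) (when $r\in[\bar r+\bar\eta,r'']$), I would write $ePoA(r,\bar r)=rL_{\bar\eta}/C_{opt}(r)$ with $L_{\bar\eta}$ independent of $r$ and invoke Claim~\ref{claim:derivative}: since $r/C_{opt}(r)$ is strictly decreasing in $r$, the supremum over this range is attained as $r\downarrow\bar r+\bar\eta$. By continuity this matches Case (b) at $\eta=\bar\eta$, which was already bounded by $1+a$. Combining the three cases yields $ePoA\le\max\{4/3,1+a\}\le\max\{2,1+a\}$.

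For tightness, I would exhibit two families of limiting instances. When $a\ge 1$ (so $\max\{2,1+a\}=1+a$), take $\bar r=b/2+\varepsilon$ (forcing $\bar f_2^*\to 0$) and send $\bar\eta\to\infty$; the subtrahend in Case (b)'s formula vanishes and $ePoA\to 1+a$. When $a\le 1$ (so $\max\{2,1+a\}=2$), take $\bar\eta\to 0$ so that \textsc{ErrorTolerant} degenerates to \textsc{MinCharge} and replicate the Pigou-style limit $\bar r\to(b/2)^+$, $r\downarrow \bar r$ used in Theorem~\ref{thm:Optimality}; the ratio $\bar r L/C_{opt}(\bar r)\to 2$ in the limit, independent of $a$.

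The technically delicate step is the Case (c) argument: one must combine the strict monotonicity granted by Claim~\ref{claim:derivative} with the continuity across the boundary $r=\bar r+\bar\eta$ to conclude that the Case (c) supremum never exceeds the Case (b) bound. This continuity hinge is what allows the single $1+a$ control to propagate across both intervals and merge with the $4/3$ bound on Case (a) into the single expression $\max\{2,1+a\}$ stated in the lemma.
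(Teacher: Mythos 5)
There is a genuine error in your Case (c) argument. You claim that the Case (c) supremum, attained as $r\downarrow\bar r+\bar\eta$, ``by continuity matches Case (b) at $\eta=\bar\eta$'' and is therefore bounded by $1+a$. But $\hat C_N(r)$ is \emph{not} continuous at $r=\bar r+\bar\eta$: for $r$ just below this point the Nash flow has $f_1=\bar f_1^*$ and link $1$ pays its original latency $\ell_1(\bar f_1^*)=\bar f_1^*$, so $\hat C_N(r)=\bar f_1^{*2}+f_2 L_{\bar\eta}$; for $r$ just above, link $1$'s flow exceeds $\bar f_1^*$, its latency jumps to $L_{\bar\eta}$, and $\hat C_N(r)\approx rL_{\bar\eta}=(\bar f_1^*+f_2)L_{\bar\eta}$. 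Since $\bar f_1^*<L_{\bar\eta}$ in general, the cost jumps upward by $\bar f_1^*(L_{\bar\eta}-\bar f_1^*)>0$, so the Case (c) limit strictly exceeds the Case (b) value and the $1+a$ control does not propagate. A concrete counterexample: take $a=0$, $\bar r=b/2$ (so $\bar f_1^*=b/2$, $\bar f_2^*=0$, $L_{\bar\eta}=b$) and $r\downarrow\bar r+\bar\eta$; then
\begin{equation*}
ePoA(r,\bar r)\;\longrightarrow\;\frac{(b/2+\bar\eta)\,b}{b^2/4+b\bar\eta}\;\xrightarrow[\bar\eta\to 0]{}\;2,
\end{equation*}
which exceeds your claimed bound $\max\{4/3,\,1+a\}=4/3$. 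Indeed your own tightness construction for $a\le 1$ produces the value $2$, contradicting your upper bound --- the proposal is internally inconsistent, and the discrepancy is precisely the discontinuity you assumed away.

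The correct treatment of Case (c), which is where the paper spends essentially all of its effort, is to use Claim~\ref{claim:derivative} only to reduce to $r=\bar r+\bar\eta$, then bound $\frac{(\bar r+\bar\eta)L_{\bar\eta}}{C_{opt}(\bar r+\bar\eta)}$ directly: one shows this is maximized at $\bar r=b/2$ and rewrites it as the rational function $\frac{4au^2+(4+2a)u+2}{\frac{4a}{a+1}u^2+4u+1}$ of $u=\bar\eta/b$, which equals $2$ at $u=0$, tends to $1+a$ as $u\to\infty$, and whose supremum over $u\ge 0$ is exactly $\max\{2,1+a\}$. Your Cases (a) and (b) and the $a\ge 1$ tightness limit are fine, but without this explicit optimization Case (c) is unproven and the stated bound cannot be reached.
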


\begin{proof} 

{\bf Upper bound.} 
Consider the third case of Lemma~\ref{lemma:ErrorTolerantApxEta2}. The $ePoA(r,\bar r)$ is given by: 

$$
    ePoA(r,\bar r) = \frac{\hat{C}_N(r)}{C_{opt}(r)} = \frac{r\hat{L}_{\bar{\eta}}}{C_{opt}(r)} 
$$

Since $\hat{L}_{\bar{\eta}}=\left(a\left(\frac{{\left(\bar r-\frac{b}{2}\right)}}{a+1}+\bar\eta\right)+b\right)$ does not depend on $r$, by Claim $\ref{claim:derivative}$,  ePoA$(r,\bar r)$ is a decreasing function of $r$ and since $r\geq \bar r +\bar\eta$, we get: 

\begin{align*}
    ePoA(r,\bar r) & \le \frac{(\bar r +\bar\eta)\left(a\left(\frac{{\left(\bar r-\frac{b}{2}\right)}}{a+1}+\bar\eta\right)+b\right)}{(\bar r +\bar\eta)^2-\frac{{\left(\bar r +\bar\eta-\frac{b}{2}\right)}^2}{a+1}} \, .
\end{align*}

It is known that the optimal flow begins using link 2 at rate $b/2$ \cite{algorithmica/ChristodoulouMP14, isaac/WangYC16:ImprovedTax}. The derivative of the above expression is non-positive, implying that the expression is maximized when $\bar{r}=b/2$. Therefore, we set $\bar{r}$ with $b/2$ and get:

\begin{align*}
    ePoA(r,\bar r) & \le \frac{(b+a\bar{\eta})(\frac{b}{2}+\bar{\eta})}{\left(\frac{b(a+1)+2a\bar{\eta}}{2(a+1)}\right)^2+a\left(\frac{\bar{\eta}}{2(a+1)}\right)^2+\frac{b\bar{\eta}}{2(a+1)}} \\
    & = \frac{(b+a\bar{\eta})(\frac{b}{2}+\bar{\eta})}{\left(\frac{b}{2}+\frac{a\bar{\eta}}{a+1}\right)^2+a\left(\frac{\bar{\eta}}{a+1}\right)^2+\frac{b\bar{\eta}}{2(a+1)}} \\
    & = \left(1 + \frac{a}{b}\bar{\eta}\right)\cdot\frac{2b+4\bar{\eta}}{b+4\bar{\eta}+\frac{4a}{b(a+1)}\bar{\eta}^2}
\end{align*}

To analyze the bound more effectively, we divide both the numerator and denominator of the right-hand side fraction by $b$, expressing the bound as a function of $u = \frac{\bar \eta}{b} \geq 0:$ 

\begin{align*}
    ePoA(r,\bar r) & \le \frac{4au^2 + (4+2a)u + 2}{\frac{4a}{a+1}u^2+4u+1}\,.
\end{align*}

We inspect this function for $u \geq 0$. The expression evaluates to $2$ at $u=0$ and approaches $1+a$ as $u \to \infty$. When $a\geq 2$, the function is monotonically increasing. When $a < 2$, the function is decreasing on the interval $[0, \frac{1-a+\sqrt{3-a}}{2a}]$ and increasing thereafter. Therefore, the worst-case $ePoA(r,\bar r)$ is $\max\{2,1+a\},$ which is strictly worse than the $\frac{4}{3}$-robustness of Corollary \ref{cor:PoA43}. It is also strictly worse than the guarantee of case (b), since for $r \in [r',\bar r + \bar\eta]$, $ePoA(r,\bar r)$ is less than $1+a$, as the fraction in the expression of Lemma \ref{lemma:ErrorTolerantApxEta2} is positive.

Hence, case (c) gives the worst case robustness guarantee.

    {\bf Lower bound.} The analysis of the \textsc{ErrorTolerant} robustness guarantee is tight in the case of two links. It suffices to set $\bar{r}=b/2$; then, we have $\bar{f_1^*}=\frac{2a\bar{r}+b}{2(a+1)}=\bar{r}$, and $\bar{f_2^*}=0$. The \textsc{ErrorTolerant} CM dictates that:

        $$
        \hat{\ell}_1(x)=
        \begin{cases}
        a\bar{\eta}+b, \quad b/2 \le x < a\bar{\eta}+b \\
        \ell_1(x), \quad \text{otherwise}
        \end{cases}
        $$

    For $r=\bar{r}+\bar{\eta}=b/2+\bar{\eta}$, we have $f_1^*=\frac{2ar+b}{2(a+1)}=\frac{2a(b/2+\bar{\eta})+b}{2(a+1)}=\frac{ab+2a\bar{\eta}+b}{2(a+1)}=\frac{b}{2}+\frac{a\bar{\eta}}{a+1}$ and correspondingly, $f_2^*=r-f_1^*=\frac{\bar{\eta}}{a+1}$. The $ePoA$ is given by:

    \begin{align*}
        ePoA(r) & = \frac{\hat{C}_N(r)}{C_{opt}(r)} \\
        & = \frac{r\hat{L}_{\bar{\eta}}}{{f_1^*}^2+a{f_2^*}^2+bf_2^*}\\
        & = \frac{(b+a\bar{\eta})(\frac{b}{2}+\bar{\eta})}{\left(\frac{b(a+1)+2a\bar{\eta}}{2(a+1)}\right)^2+a\left(\frac{\bar{\eta}}{2(a+1)}\right)^2+\frac{b\bar{\eta}}{2(a+1)}} \\
    & = \frac{(b+a\bar{\eta})(\frac{b}{2}+\bar{\eta})}{\left(\frac{b}{2}+\frac{a\bar{\eta}}{a+1}\right)^2+a\left(\frac{\bar{\eta}}{a+1}\right)^2+\frac{b\bar{\eta}}{2(a+1)}} \\
    & = \left(1 + \frac{a}{b}\bar{\eta}\right)\cdot\frac{2b+4\bar{\eta}}{b+4\bar{\eta}+\frac{4a}{b(a+1)}\bar{\eta}^2}
    \end{align*}
\end{proof}

\subsection{Many links} 
We extend our results to networks with an arbitrary number of links, establishing upper bounds on the approximation and robustness guarantees, respectively. For simplicity of the presentation we set $k=k(\bar r)$.

\begin{theorem}
\label{thm:errorTolerance}
    We define $\Lambda_j$ and $\Gamma_j$ as $\Lambda_j = \sum_{i\le j}\frac{1}{a_i}$ and $\Gamma_j = \sum_{i \le j}\frac{b_i}{a_i}$, respectively. Let $\bar f^*$ be the optimal flow under some predicted input rate $\bar r$. Then, for any error-tolerance threshold $\bar\eta>0$, the \textsc{ErrorTolerant} coordination mechanism achieves an approximation of $ePoA(\eta) \le 1 + \Lambda_j(a_{max}+a_j) \left(1 - \frac{\Gamma_j\eta - C_j\Lambda_j}{\eta^2+\Gamma_j\eta-C_j\Lambda_j}\right)$ if $\eta \le \bar\eta$, and $2+2\bar\eta\frac{a_k}{b_k}$ otherwise, where $\eta$ is the prediction error.
\end{theorem}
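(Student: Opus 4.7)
The plan is to split the analysis into the two regimes of the theorem and in each case extend the two-link arguments of Lemma~\ref{lemma:ErrorTolerantApxEta2} and Lemma~\ref{lemma:rob2} to arbitrarily many links, using the closed-form expression $C_{opt}(r)=\tfrac{1}{\Lambda_j}(r^2+r\Gamma_j)-C_j$ (already exploited in the proof of Theorem~\ref{thm:Optimality}) as the main algebraic tool.

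For the smoothness bound in the regime $\eta\le\bar\eta$, the characterization in Lemma~\ref{lemma:characterizationConsSmooth} guarantees that the Nash flow of the \textsc{ErrorTolerant} mechanism never crosses a discontinuity: on each link $i<k$ the Nash flow satisfies $f_i=\bar f^*_i$, and the leftover $\bar f^*_k+\eta$ units are distributed across links $k,\ldots,j$ (where $j=k(r)$) so as to equalize the (unmodified) latencies on the first $j$ links. Letting $f^*$ denote the optimum for the actual rate $r$ and $\delta_i=f_i-f^*_i$, I would use the optimality condition $2a_if^*_i+b_i=\mathrm{const}$ together with $\sum_i\delta_i=0$ to solve explicitly for each $\delta_i$ in terms of $\eta$, $a_i$, $b_i$ and the aggregates $\Lambda_j,\Gamma_j$. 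Substituting into $\hat C_N(r)-C_{opt}(r)$, the linear cross terms cancel exactly as in case~(b) of Lemma~\ref{lemma:ErrorTolerantApxEta2}, leaving a quadratic residual of the form $\sum_i a_i\delta_i^2$ which can be upper-bounded by a multiple of $\Lambda_j(a_{\max}+a_j)$. Dividing by the lower bound on $C_{opt}(r)$ obtained from the closed-form expression yields the stated smoothness guarantee.

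For the robustness bound when $\eta>\bar\eta$, I would first observe that the under-prediction subcase ($r<\bar r-\bar\eta$) is handled by the unmodified PoA bound $4/3$ exactly as in case~1 of Theorem~\ref{thm:Optimality}, so the real work is the over-prediction subcase $r>\bar r+\bar\eta$. Here I would follow the template of case~(c) in Lemma~\ref{lemma:ErrorTolerantApxEta2} and Lemma~\ref{lemma:rob2}: on the first $k-1$ links the Nash flow sits on the modified plateau, so $\hat C_N(r)\approx r\,\hat L_{\bar\eta}$. By Claim~\ref{claim:derivative}, $r\hat L_{\bar\eta}/C_{opt}(r)$ is decreasing in $r$, so the supremum over $r>\bar r+\bar\eta$ is attained in the limit $r\to(\bar r+\bar\eta)^+$. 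The adversarial choice of the prediction then drives $\bar f^*_k\to 0^+$, which forces $L\to b_k$ and $\hat L_{\bar\eta}\le a_k\bar\eta+b_k$ (since spreading the extra $\bar\eta$ over links $k,\ldots,m$ can only decrease the maximum latency compared with routing it through link $k$ alone). Plugging these values into the ratio and simplifying via the closed form of $C_{opt}$ collapses the expression to the claimed bound $2+2\bar\eta\,a_k/b_k$.

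The principal obstacle is the multi-link algebra in the smoothness case: with two links the perturbation collapses to the scalar $a\eta/(a+1)$, whereas with $j$ used links the perturbations are coupled through marginal-cost equalization and must be tracked through the aggregates $\Lambda_j$ and $\Gamma_j$. A secondary subtlety is that $k=k(\bar r)$ and $j=k(r)$ need not coincide when $\eta$ is sizable (yet still within $\bar\eta$), requiring separate bookkeeping for over- and under-prediction to choose the correct representation of the optimum. Carrying out the cancellations that eliminate the linear cross terms and isolate the clean quadratic residual in terms of these aggregate coefficients is where the bulk of the calculation resides.
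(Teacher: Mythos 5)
Your decomposition and toolkit match the paper's almost exactly: the paper also splits into $\eta\le\bar\eta$ versus not, handles over-prediction via the variational-inequality $4/3$ bound, cancels the linear cross terms via marginal-cost equalization to leave a residual $\sum_i a_i\delta_i^2 + a_k\eta^2 \le \eta^2(a_{\max}+a_k)$, and divides by the closed form $C_{opt}(r)=\frac{1}{\Lambda_k}(r^2+\Gamma_k r)-C_k$ with $r\ge\eta$; for the robustness regime it likewise writes $\hat C_N(r)\le rL_{\bar\eta}$ and invokes Claim~\ref{claim:derivative}. One simplification you miss: rather than solving for the actual Nash distribution over links $k,\ldots,m$, the paper simply upper-bounds the Nash cost by freezing links $i<k$ at $\bar f_i^*$ and dumping the entire remainder onto link $k$, which avoids the coupled multi-link algebra you flag as the principal obstacle.

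Two genuine gaps remain in your robustness argument. First, you derive $\hat L_{\bar\eta}\le a_k\bar\eta+b_k$ by letting ``the adversarial choice of the prediction drive $\bar f_k^*\to 0^+$''; that identifies the worst case but does not prove the bound for every $\bar r$. In general $\hat L_{\bar\eta}\le a_k(\bar f_k^*+\bar\eta)+b_k$, and the $\bar f_k^*$-dependence must be absorbed by pairing it with the lower bound $C_{opt}(\bar r)\ge \bar r\,\ell_k(\bar f_k^*)/2$ (which follows from $\ell_i(\bar f_i^*)\ge \ell_k(\bar f_k^*)/2$ for all used links), giving the uniform estimate $2\bigl(\ell_k(\bar f_k^*)+a_k\bar\eta\bigr)/\ell_k(\bar f_k^*)\le 2+2\bar\eta a_k/b_k$ since $\ell_k(\bar f_k^*)\ge b_k$. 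Second, the approximation $\hat C_N(r)\approx r\hat L_{\bar\eta}$ fails once $r$ exceeds the rate $r'=\sum_i f_i^{L_{\bar\eta}}$ at which every link has crossed its plateau; the paper treats $r>r'$ separately, observing that there the modified Nash flow coincides with the original one and ePoA reverts to $4/3$. Both gaps are fixable, but as written the robustness half of your argument does not yet establish the stated bound for all inputs.
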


We split the proof of Theorem~\ref{thm:errorTolerance} into Lemmas \ref{lemma:ErrorTolerantApxEtaMany}, \ref{lemma:ErrorTolerantApxEtaManyOverpredicted} and \ref{lemma:ErrorTolerantApxEtaBar}.

\begin{lemma}
\label{lemma:ErrorTolerantApxEtaMany}
    Consider any network $G$ of $m$ links. Given any input rate $\bar r$ and the optimal flow $\bar f^*$ for $\bar r$. Define $\Lambda_k$ and $\Gamma_k$ as $\Lambda_k = \sum_{i\le k}\frac{1}{a_i}$ and $\Gamma_k = \sum_{i \le k}\frac{b_i}{a_i}$, respectively. Given an error-tolerance threshold $\bar\eta$, and an actual error $\eta\le\bar\eta$, the \textsc{ErrorTolerant} mechanism achieves an approximation guarantee given by: 
    $$ePoA(\eta) \le 1 + \Lambda_k(a_{max}+a_k) \left(1 - \frac{\Gamma_k\eta - C_k\Lambda_k}{\eta^2+\Gamma_k\eta-C_k\Lambda_k}\right),$$
    where $C_k = \left(\sum_{i=1}^h\sum_{h=i}^k(b_h-b_i)^2\frac{1}{a_ha_i}\right)/4\Lambda_k$, and $r \in [\bar{r}, \bar{r}+\bar\eta]$. 
\end{lemma}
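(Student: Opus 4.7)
The plan is to adapt the argument from case (b) of the two-link result in Lemma~\ref{lemma:ErrorTolerantApxEta2} to the $m$-link setting. First I would invoke Lemma~\ref{lemma:characterizationConsSmooth} to describe the Nash flow $f$ of the \textsc{ErrorTolerant} mechanism at rate $r=\bar r+\eta$ with $\eta\le\bar\eta$: on the first $k-1$ links the modification pins $f_i=\bar f_i^*$, while the remaining $\bar f_k^*+\eta$ units are routed through links $k,\ldots,m$ under the original latencies. Writing $L_\eta$ for the common latency on used links in $\{k,\ldots,m\}$, this yields
\[
\hat C_N(r,\bar r) \;=\; \sum_{i<k}\bar f_i^*\,\ell_i(\bar f_i^*) \;+\; (\bar f_k^*+\eta)\,L_\eta\,.
\]

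Next I would analyze the cost gap $\hat C_N(r,\bar r)-C_{opt}(r)$. Let $f^*$ be the optimal flow at rate $r$ and set $\delta_i=f_i^*-\bar f_i^*$, noting that $\sum_i\delta_i=\eta$ and that, by marginal-cost equality both for $\bar f^*$ at $\bar r$ and for $f^*$ at $r$, the shifts $\delta_i$ are proportional to $1/a_i$ across the currently-used links. Expanding $\sum_i(f_i\ell_i(f_i)-f_i^*\ell_i(f_i^*))$ and regrouping via $f_i\ell_i(f_i)-f_i^*\ell_i(f_i^*)=(f_i-f_i^*)(2a_if_i^*+b_i)+a_i(f_i-f_i^*)^2$, the linear part sums to zero because the marginals $2a_if_i^*+b_i$ are constant across used links and $\sum_i(f_i-f_i^*)=0$. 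This leaves a purely quadratic excess $\sum_i a_i(f_i-f_i^*)^2$, which I would upper bound by $(a_{\max}+a_k)\,\eta^2$ using that the Nash shift concentrates on link $k$ (and any links $>k$ the Nash activates), while the optimal shift is smoothed across the links $1,\ldots,k$ in proportion to $1/a_i$.

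Third, I would invoke the closed form recorded in Claim~\ref{claim:derivative}, namely $C_{opt}(r)=\tfrac{1}{\Lambda_k}(r^2+r\Gamma_k)-C_k$, valid as long as the active set at $r=\bar r+\eta$ still contains links $1,\ldots,k$, which is ensured for $\eta\le\bar\eta$. Expanding this at $r=\bar r+\eta$ and discarding the nonnegative $\bar r$-dependent terms, one obtains the lower bound $\Lambda_k\,C_{opt}(r)\ge \eta^2+\Gamma_k\eta-C_k\Lambda_k$. Combining with the quadratic-excess bound gives
\[
\frac{\hat C_N(r,\bar r)}{C_{opt}(r)}\;\le\;1+\frac{\Lambda_k(a_{\max}+a_k)\,\eta^2}{\eta^2+\Gamma_k\eta-C_k\Lambda_k}\,,
\]
which is exactly the stated bound once rewritten via the identity $\tfrac{\eta^2}{\eta^2+\Gamma_k\eta-C_k\Lambda_k}=1-\tfrac{\Gamma_k\eta-C_k\Lambda_k}{\eta^2+\Gamma_k\eta-C_k\Lambda_k}$.

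The main obstacle is controlling the quadratic excess $\sum_i a_i(f_i-f_i^*)^2$: one has to argue carefully that the Nash-side concentration of the surplus $\eta$ on links $k,\ldots,m$, as compared to the $1/a_i$-proportional spreading of the optimal across links $1,\ldots,k$, produces a squared deviation whose coefficient collapses to exactly $a_{\max}+a_k$. A secondary subtlety is the possibility that the optimal flow at $r$ activates strictly more links than at $\bar r$; this requires verifying that the closed form for $C_{opt}(r)$ and the marginal identities remain applicable throughout the $\eta\le\bar\eta$ window, possibly via a continuity argument at each activation threshold.
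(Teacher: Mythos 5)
Your overall architecture matches the paper's: decompose the cost gap into a linear term that cancels via marginal-cost equality plus a quadratic excess, bound the excess by $(a_{\max}+a_k)\eta^2$, and lower-bound $C_{opt}(r)$ by its closed form evaluated at $r=\eta$ (using $r=\bar r+\eta\ge\eta$ and monotonicity). However, the step you yourself flag as ``the main obstacle'' is exactly the crux, and as you have set things up it does not go through cleanly. By working with the \emph{actual} Nash flow, which spreads the surplus $\bar f_k^*+\eta$ over links $k,\ldots,m$, you run into two problems. First, the linear term $\sum_i(f_i-f_i^*)(2a_if_i^*+b_i)$ only telescopes to zero if every link carrying a deviation is used by $f^*$; if the Nash flow activates links beyond those used by $f^*$, the marginal there is $b_i\ge\mu$ rather than $\mu$, the linear term becomes nonnegative rather than zero, and it works \emph{against} your upper bound. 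Second, the deviations $f_i-f_i^*$ on links $i\ge k$ then have mixed signs and no clean magnitude control, so the claimed collapse of $\sum_i a_i(f_i-f_i^*)^2$ to $(a_{\max}+a_k)\eta^2$ is not justified.

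The paper avoids both issues with one move you do not make: it first \emph{dominates} the Nash cost by the cost of the flow that keeps $\bar f_i^*$ on every link $i<k$ and routes the entire remainder $r-\sum_{i<k}\bar f_i^*=\bar f_k^*+\sum_{i<k}\delta_i$ through link $k$ alone (this dominates because all used links among $k,\ldots,m$ in the Nash flow share a common latency $L_\eta\le\ell_k(\bar f_k^*+\eta)$). After that substitution the deviations are explicit: $-\delta_i$ on each $i<k$ and $+\sum_{i<k}\delta_i$ on link $k$, all links involved are used by $f^*$, the linear term cancels exactly, and the quadratic excess is $\sum_{i<k}a_i\delta_i^2+a_k\bigl(\sum_{i<k}\delta_i\bigr)^2\le a_{\max}\eta^2+a_k\eta^2$ using only $\delta_i\ge0$ and $\sum_i\delta_i=\eta$. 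To repair your argument you should insert this domination step (or otherwise rigorously handle the spread of the Nash surplus and the mismatch of active link sets); the remainder of your outline, including the lower bound $\Lambda_kC_{opt}(r)\ge\eta^2+\Gamma_k\eta-C_k\Lambda_k$ and the final algebraic identity, is correct and coincides with the paper's.
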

\begin{proof} 
Recall that by Lemma \ref{auxLem1}, all links with positive flow under $\bar f^*$ are exactly the ones with index at most $k$.

For each link $i$, we define the difference between the optimal flow $f^*$ under the actual rate $r$ and the optimal flow $\bar f^*$ under the predicted rate $\bar{r}$ as $\delta_i = f_i^* - \bar{f_i^*}$. Then, it holds that $\sum_i \delta_i = \eta$. Moreover, let $a_{max} = \max_i a_i$ denote the maximum $a_i$ coefficient of the latencies. 

Then, the cost in the Nash flow for rate $r$ is upper bounded by the cost where for each link $i<k$ the flow is $\bar f^*_i$ (since in any Nash flow with rate up to  $\bar r+\bar \eta$ no such link is used with more flow) and the rest of the rate is routed via link $k$ (in the Nash flow more links with indices greater than $k$ could be used, which would improve the total cost). Therefore, the ePoA is bounded by:

    \begin{align*}
        & ePoA(r) =\frac{\hat{C}_N(r)}{C_{opt}(r)} \\
        & \le \frac{\sum_{1\le i<k}\bar{f}_i^*\ell_i(\bar{f}_i^*)+(r - \sum_{1 \le i < k}\bar{f}_i^*)\ell_k(r-\sum_{1\le i <k}\bar{f}_i^*)}{C_{opt}(r)} \\
        & = \frac{\sum_{1\le i<k}f_i^*(\ell_i(f_i^*)-a_i\delta_i)-\sum_{1\le i<k}\delta_i(\ell_i(f_i^*)-a_i\delta_i)+(f_k^*+\sum_{1\le i<k}\delta_i)\ell_k(f_k^*+\sum_{1\le i<k}\delta_i)}{C_{opt}(r)} \\
        & = \frac{\sum_{1\le i<k}f_i^*(\ell_i(f_i^*)-a_i\delta_i)-\sum_{1\le i<k}\delta_i(\ell_i(f_i^*)-a_i\delta_i)}{C_{opt}(r)} \\
        & + \frac{f_k^*(\ell_k(f_k^*)+a_k\sum_{1\le i<k}\delta_i)+\sum_{1\le i<k}\delta_i\ell_k(f_k^*)+a_k\sum_{1\le i<k}\delta_i\sum_{1\le i<k}\delta_i}{C_{opt}(r)} 
    \end{align*}
    Observe that $\sum_{1 \le i <k}f_i^*\ell_i(f_i^*) + f_k^*\ell_k(f_k^*) \leq C_{opt}(r)$ and that for each two links $i,j$, it holds that their marginal costs are equal in $f^*$, so $2a_if^*_i+b_i=2a_jf^*_j+b_j$. Hence, we obtain:
    \begin{align}
        ePoA(r) & \le \frac{C_{opt(r)}+\sum_{1\le i<k}\delta_i(a_kf_k^*-a_if_i^*)+\sum_{1\le i<k}\delta_i(\ell_k(f_k^*)-\ell_i(f_i^*))}{C_{opt}(r)}\notag\\
        & +\frac{\sum_{1\le i<k}a_i\delta_i^2+a_k\sum_{1\le i<k}\delta_i\sum_{1\le i<k}\delta_i}{C_{opt}(r)} \notag\\
        & \le 1 + \frac{\sum_{1\le i<k}\delta_i(b_i-b_k)/2+\sum_{1\le i<k}\delta_i(b_k-b_i)/2+\sum_{1\le i<k}a_i\delta_i^2+\eta^2 a_k}{C_{opt}(r)} \notag\\
    \end{align}

    Simplifying, we have:
    \begin{align}
        & \le 1 + \frac{\sum_{1 \le i < k}a_i\delta_i^2+a_k\eta^2}{C_{opt}(r)} \label{eq:smoothIntermediateEq} \\
        & \le 1 + \frac{\sum_{1 \le i < k}\delta_i\sum_{1 \le i <k}a_i\delta_i+a_k\eta^2}{C_{opt}(r)} \notag \\
        & \le 1 + \frac{\eta^2(a_{max}+a_k)}{C_{opt}(r)} \notag \,.
    \end{align}

    Using the expression on the optimal cost, we obtain:

    \begin{align*}
        ePoA(r) & \le 1 + \frac{\eta^2(a_{max}+a_k)}{\frac{1}{\Lambda_k}(r^2+\Gamma_kr)-C_k} \\
        & \le 1+ \frac{\eta^2(a_{max}+a_k)}{\frac{1}{\Lambda_k}(\eta^2+\Gamma_k\eta)-C_k} \\
        & = 1 + \Lambda_k(a_{max}+a_k)\frac{\eta^2}{\eta^2+\Gamma_k\eta-\Lambda_kC_k} \\
        & = 1 + \Lambda_k(a_{max}+a_k) \left(1 - \frac{\Gamma_k\eta - C_k\Lambda_k}{\eta^2+\Gamma_k\eta-C_k\Lambda_k}\right)
    \end{align*}

    For $\eta=0$, we obtain $ePoA=1$, whereas as $\eta \rightarrow \infty$, we have $ePoA \le 1 + \Lambda_k(a_{max}+a_k)$. 
\end{proof}

\begin{lemma}
\label{lemma:ErrorTolerantApxEtaManyOverpredicted}
    Consider any network $G$ of $m$ links and any input rate $\bar r$. Then, for $r\leq \bar r$, ePoA$\le 4/3$. 
\end{lemma}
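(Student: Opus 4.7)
The plan is to reduce this overprediction case to the classical selfish-routing PoA analysis, mirroring Case 1 of Theorem~\ref{thm:Optimality} for \textsc{MinCharge}. The crucial observation is that the modifications imposed by the \textsc{ErrorTolerant} mechanism only activate on links $i<k$ for flow values strictly above $\bar{f}_i^*$, while on links $i\geq k$ the latencies are unchanged. So once we show that both the Nash flow and the optimum under the actual rate $r\leq\bar{r}$ stay within the unmodified region, the problem collapses to the original affine congestion game.

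First I would let $f^*$ denote the optimum under original latencies at rate $r$, and $f$ the Nash flow under the modified latencies at rate $r$. Invoking the monotonicity of optimal and Nash flows in the input rate on parallel-link networks (the same references \cite{algorithmica/ChristodoulouMP14,CominettiDS24} used in the proof of Theorem~\ref{thm:Optimality}), I would argue that $f_i^*\leq \bar{f}_i^*$ and $f_i\leq \bar{f}_i^*$ for every link $i<k$. For links $i\geq k$, the mechanism leaves the latency unchanged by definition, so no special treatment is needed there. Therefore at the flow values $f_i^*$ and $f_i$ the modified latencies agree with the original ones, i.e., $\hat{\ell}_i(f_i^*)=\ell_i(f_i^*)$ and $\hat{\ell}_i(f_i)=\ell_i(f_i)$ for all $i\in M$.

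Consequently $f$ is also a Nash flow with respect to the original latencies $(\ell_i)_{i\in M}$ (a unilateral deviation to a slightly higher flow on some link would either stay within the unmodified region, where modified and original latencies coincide, or go into the modified region, where modified latencies only exceed the original ones—so original latency cost at $f$ is the Nash cost). Using the variational inequality
$$\sum_{i\in M}\ell_i(f_i)f_i \;\leq\; \sum_{i\in M}\ell_i(f_i)f_i^*,$$
and plugging in affine $\ell_i(x)=a_ix+b_i$, the standard Roughgarden--Tardos argument \cite{RoughgardenT02} yields $\sum_i \ell_i(f_i)f_i \leq \tfrac{4}{3}\sum_i \ell_i(f_i^*)f_i^*$, i.e., $\hat{C}_N(r,\bar r)=C_N(r)\leq \tfrac{4}{3}C_{opt}(r)$.

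The only non-routine step is justifying the flow-monotonicity claims $f_i,f_i^*\leq \bar{f}_i^*$ for $i<k$; everything else is a direct invocation of the classical PoA bound. These monotonicity properties are exactly the ones the paper already relied on in the \textsc{MinCharge} proof, and they apply verbatim here since the \textsc{ErrorTolerant} mechanism shares the structural property that its modifications are concentrated strictly above the reference optimal flow on links $i<k$, while it behaves as the identity on links $i\geq k$.
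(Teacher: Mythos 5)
Your proposal is correct and follows essentially the same route as the paper: both arguments observe that for $r\leq\bar r$ the Nash and optimal flows stay weakly below $\bar f^*$ componentwise (by monotonicity of flows in the input rate), so the modified latencies coincide with the originals at the relevant flow values, and the bound then follows from the variational inequality and the standard Roughgarden--Tardos analysis. The only cosmetic difference is that you spell out explicitly that $f$ is a Nash flow for the original latencies and that links $i\geq k(\bar r)$ need no monotonicity argument since they are unmodified, details the paper leaves implicit.
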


\begin{proof}
    Let $\bar f^*$ and $f^*$ be the optimal flow under $\bar r$ and $r$, respectively. with respect to the original latency functions. Notice that since $r\le \bar r$, for each link $i$ it holds that $f_i^*\leq \bar{f_i^*}$, and thus, $\hat{\ell}_i(f_i^*)=\ell_i(f_i^*)$. The same holds for the Nash flow $f$ under $r$, i.e., for each link $i$ it holds that $f_i\leq \bar{f_i^*}$, and thus, $\hat{\ell}_i(f_i)=\ell_i(f_i)$. Then, the variational inequality \cite{trsc.14.1.42/Dafermos,NesterovDePalma} implies that $\sum_{i} f_i\ell_i(f_i) \le \sum_{i} f_i^*\ell_i(f_i)\,.$
    By following the proof of the celebrated result on the PoA of selfish routing due to \cite{RoughgardenT02}, we have that $\frac{\hat{C}_N(r)}{C_{opt}(r)} = 4/3$.
\end{proof}

Next, we show that the \textsc{ErrorTolerant} mechanism maintains an approximation that is bounded by a function of $\bar\eta$ at all times, even when the prediction error is arbitrarily large.

\begin{lemma}
\label{lemma:ErrorTolerantApxEtaBar}
    Consider any network $G$ of $m$ links. Given any input rate $\bar r$ and the optimal flow $\bar f^*$ for $\bar r$. Then, for any error-tolerance threshold $\bar\eta$, the \textsc{ErrorTolerant} mechanism achieves an approximation of at most $2+2\bar\eta\frac{a_k}{b_k}$, i.e., ePoA$\le 2+2\bar\eta\frac{a_k}{b_k}$.
\end{lemma}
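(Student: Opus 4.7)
The plan is a case analysis on the actual input rate $r$, closely mirroring the structure of the proof of Theorem~\ref{thm:Optimality}. Let $r^{\star} := \sum_i f_i^{L_{\bar\eta}}$, where $f_i^{L_{\bar\eta}}$ is as in the mechanism definition for $i<k$ and is defined by $\ell_i(f_i^{L_{\bar\eta}}) = L_{\bar\eta}$ for $i \geq k$. In the overprediction regime $r \leq \bar{r}$, Lemma~\ref{lemma:ErrorTolerantApxEtaManyOverpredicted} directly gives $ePoA(r,\bar{r}) \leq 4/3$; in the saturated regime $r > r^{\star}$ every $\hat\ell_i(f_i) = \ell_i(f_i)$, so $ePoA(r,\bar{r}) = PoA(r) \leq 4/3$. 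Both of these are trivially at most $2 + 2\bar\eta a_k/b_k$.

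The central case is $r \in (\bar{r}, r^{\star}]$, where I will argue that in any Nash flow $f$ every used link has modified latency at most $L_{\bar\eta}$ (up to the arbitrarily small $\epsilon$-perturbation), so that $\hat{C}_N(r) \leq r L_{\bar\eta}$. For $r \in (\bar{r}, \bar{r} + \bar\eta]$, the additional flow is routed entirely through the subnetwork of links $\geq k$, since each link $i<k$ has a latency jump at $\bar{f}_i^*$ up to $L_{\bar\eta}$; by definition the subnetwork's Nash latency at any rate at most $\bar{f}_k^* + \bar\eta$ is at most $L_{\bar\eta}$, while each link $i<k$ keeps $f_i = \bar{f}_i^*$ with latency $\ell_i(\bar{f}_i^*) \leq L \leq L_{\bar\eta}$. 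For $r \in (\bar{r} + \bar\eta, r^{\star}]$, additional flow begins to populate the modified regions on links $i<k$, where the modified latency is essentially $L_{\bar\eta}$; in any Nash equilibrium the common latency of used links must equal this value (up to $\epsilon$).

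Once $\hat{C}_N(r) \leq r L_{\bar\eta}$ is established, Claim~\ref{claim:derivative} implies that $r L_{\bar\eta}/C_{opt}(r)$ is decreasing in $r$, so its maximum over $r \in [\bar{r}, r^{\star}]$ is attained at $r = \bar{r}$, giving $ePoA(r, \bar{r}) \leq \bar{r} L_{\bar\eta}/C_{opt}(\bar{r})$. To bound this quantity, I will use that link $k$ carries some flow $x_k \leq \bar{f}_k^* + \bar\eta$ in the subnetwork Nash flow with $L_{\bar\eta} = a_k x_k + b_k$, so that $L_{\bar\eta} \leq L + a_k \bar\eta$, where $L = a_k \bar{f}_k^* + b_k \geq b_k$. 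Combining this with the lower bound $C_{opt}(\bar{r}) \geq L\bar{r}/2$ derived in the proof of Theorem~\ref{thm:Optimality} yields
$$\frac{\bar{r} L_{\bar\eta}}{C_{opt}(\bar{r})} \leq \frac{2 L_{\bar\eta}}{L} \leq \frac{2(L + a_k \bar\eta)}{L} = 2 + \frac{2 a_k \bar\eta}{L} \leq 2 + \frac{2 a_k \bar\eta}{b_k},$$
as desired.

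The main technical obstacle is cleanly establishing the uniform bound $\hat{C}_N(r) \leq r L_{\bar\eta}$ across both sub-regimes of the central case, since this requires a structural analysis of the Nash flow on either side of $\bar{r} + \bar\eta$ and a careful treatment of the $\epsilon$-perturbation that distinguishes the modified region from the original latency curve. Once that claim is in hand, the rest of the proof is a direct application of Claim~\ref{claim:derivative} combined with the elementary inequalities $L_{\bar\eta} \leq L + a_k \bar\eta$ and $L \geq b_k$.
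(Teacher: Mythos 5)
Your proposal is correct and reaches the lemma's bound by the same final computation as the paper (evaluating $rL_{\bar\eta}/C_{opt}(r)$ at $r=\bar r$ via Claim~\ref{claim:derivative}, then using $L_{\bar\eta}\le \ell_k(\bar f_k^*)+a_k\bar\eta$ and $C_{opt}(\bar r)\ge \bar r\,\ell_k(\bar f_k^*)/2$), but it handles the intermediate regime differently. The paper splits off $r\in[\bar r,\bar r+\bar\eta]$ as a separate claim, bounding it through the inequality \eqref{eq:smoothIntermediateEq} inherited from Lemma~\ref{lemma:ErrorTolerantApxEtaMany} together with $a_if_i^*\le\ell$ and $\ell_i(f_i^*)\ge b_k/2$; you instead observe that the uniform Nash-cost bound $\hat C_N(r)\le rL_{\bar\eta}$ already holds on all of $(\bar r, r^\star]$ --- on $(\bar r,\bar r+\bar\eta]$ the links $i<k$ remain at $\bar f_i^*$ with latency at most $L\le L_{\bar\eta}$ while the excess is absorbed by links $k$ through $m$ at latency at most $L_{\bar\eta}$, and on $(\bar r+\bar\eta, r^\star]$ the common Nash latency is $L_{\bar\eta}$ up to $\varepsilon$ --- so a single application of Claim~\ref{claim:derivative} covers the whole central case. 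This is a genuine (if mild) simplification: it decouples the robustness bound from the smoothness machinery of Lemma~\ref{lemma:ErrorTolerantApxEtaMany}, at the price of the structural argument you flag yourself, namely verifying that for $r\in(\bar r,\bar r+\bar\eta]$ the Nash flow keeps $f_i=\bar f_i^*$ for $i<k$ and routes the surplus through the tail subnetwork; that verification is exactly the error-tolerance argument already made in Lemma~\ref{lemma:characterizationConsSmooth}, so the gap is fillable and your route is sound.
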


\begin{proof}
We will show the bound for any value of $r$, which would give the ePoA bound. For $r\leq \bar r$, by Lemma~\ref{lemma:ErrorTolerantApxEtaManyOverpredicted} we get ePoA$\le 4/3<2+2\bar\eta\frac{a_k}{b_k}$.

In the following claim we show an upper bound on the ePoA$(r,\bar r)$ for $r\in [\bar r, \bar r+\bar\eta]$, different from the one in Lemma~\ref{lemma:ErrorTolerantApxEtaMany}, that helps to show that in this case the bound is at most  $2+2\bar\eta\frac{a_k}{b_k}$, as well.

    \begin{claim}
        For any predicted input $\bar{r}$ and any $r \in [\bar{r}, \bar{r}+\bar\eta]$, it is $ePoA(r,\bar r)\leq 2+2\bar\eta\frac{a_k}{b_k}$.
    \end{claim}

    \begin{proof}
        For each link $i$, we define the difference between the optimal flow $f^*$ under the actual rate $r$ and the optimal flow $\bar f^*$ under the predicted rate $\bar{r}$ as $\delta_i = f_i^* - \bar{f_i^*}$, as in Lemma~\ref{lemma:ErrorTolerantApxEtaMany}. 
        We will use \eqref{eq:smoothIntermediateEq} of Lemma~\ref{lemma:ErrorTolerantApxEtaMany}, where $\eta\leq \bar\eta$ and $\delta_i\leq f_i^*$:
        \begin{align*}
        ePoA(r, \bar r) & \le 1 + \frac{\sum_{1 \le i < k}a_i\delta_i^2+a_k\eta^2}{C_{opt}(r)}\\
        & \le 1 + \frac{\sum_{1 \le i <k}a_i(f_i^*)^2+a_k\eta^2}{C_{opt}(r)}\,.
        \end{align*}

        For any two links $i,j$, with $f_i^*>0$ and $f_j^*>0$, it holds that $\ell_j(f_j^*)\geq \frac{2a_jf_j^*+b_j}{2}=\frac{2a_if_i^*+b_i}{2}$, so $\ell_j(f_j^*)\geq a_if_i^*$ and $\ell_j(f_j^*)\geq\frac{b_i}{2}$. Let $\ell=\min_{j\leq k}\ell_j(f_j^*)$, then $a_if_i^*\leq \ell$, for all $i\leq k$, and we will use that $\ell_i(f_i^*)\geq\frac{b_k}{2}$, for all $i$ with $f_i^*>0$:

        \begin{align*}
        ePoA(r, \bar r) & \le 1 + \frac{\sum_{1 \le i <k} f_i^* \ell+a_k\eta^2}{\sum_{i}f_i^*\ell_i(f_i^*) }\\
        & \le 1 + \frac{\sum_{1 \le i <k}f_i^* \ell_i(f_i^*) +a_k\eta^2}{\sum_{i}f_i^*\ell_i(f_i^*)}\\
        & \le 2 + \frac{a_k\eta^2}{\sum_{i}f_i^*\ell_i(f_i^*)}\\
        & \le 2 + \frac{a_k\eta^2}{\frac{b_k}2 \sum_{i} f_i^*}\\
        & \le 2 + \frac{a_k\eta^2}{\frac{b_k}2 \eta}\\
        & \leq 2+2\bar\eta\frac{a_k}{b_k}\,.
        \end{align*}
        
    \end{proof}
    
    We now consider the case where $r >\bar r + \bar\eta$. According to the \textsc{ErrorTolerant} mechanism, at any Nash equilibrium $f$, all links will experience the jump in the latency cost, i.e. $f_i > \bar{f}_i^*$ for all $i$. Let $f^{L_{\bar{\eta}}}_i$ be as defined in the class of \textsc{ErrorTolerant} mechanisms, i.e., $f^{L_{\bar{\eta}}}_i$ is such that $\ell_i(f^{L_{\bar{\eta}}}_i) \approx L_{\bar{\eta}}$.
    Let $r' = \sum_{i}{f^{L_{\bar{\eta}}}_i}$ be the maximum input rate that in the Nash flow the latency of all links is approximately $L_{\bar{\eta}}$. 
    
    We split the proof into two cases according to whether $r$ is greater or less than $r'$. Suppose first that $r> r'$ and by the definition of $r'$, there exists a link $j$ for which $\ell_j(f_j) > L_{\bar{\eta}}$. But since all link costs are continuous after $\bar{f}_i^*$, it holds that $\ell_i(f_i) = \ell_j(f_j) > L_{\bar{\eta}}$ for any link $i$. Thus, the cost of the modified Nash flow coincides with the cost of the Nash flow in the initial network, i.e. $\hat{C}_N(r) = C_N(r)$, hence $ePoA \leq \frac{4}{3}$.

    Now, suppose that $r \in (\bar r + \bar \eta, r']$. According to the definition of $r'$, the total cost satisfies $\hat{C}_N(r) \leq rL_{\bar{\eta}}+\varepsilon$, where $\varepsilon>0$ is as defined in the \textsc{ErrorTolerant} mechanism; since $\varepsilon$ is arbitrarily small, we set it to zero for simplicity. Hence:

    \begin{align*}
        ePoA(r, \bar{r}) & = \frac{\hat{C}_N(r)}{C_{opt}(r)} = \frac{rL_{\bar{\eta}}}{C_{opt}(r)}.
    \end{align*}

    By Claim \ref{claim:derivative} the ratio increases as $r$ decreases, and since we are in the case that $r\geq \bar r$, it holds that 
    $$ePoA(r, \bar{r}) \le \frac{{\bar{r}}L_{\bar{\eta}}}{C_{opt}(\bar{r})}\,.$$

    Based on the definition of $L_{\bar{\eta}}$, it is upper bounded by the cost of link $k$ with flow $\bar f^*_k + \bar\eta$. Moreover, for any link $i$ with $\bar f_i^*>0$, $\ell_i(\bar f_i^*)\geq \frac{2a_i\bar f_i^*+b_i}{2}=\frac{2a_k\bar f_k^*+b_k}{2}\geq \frac{\ell_k(\bar f_k^*)}{2}$. By plugging in the above:
    
    \begin{align*}
        ePoA(r, \bar{r}) & \leq  \frac{\bar{r}(a_k(\bar f_k^*+\bar\eta)+b_k)}{\bar{r}\frac{\ell_k(\bar f_k^*)}{2}} \\
        & = 2\frac{\ell_k(\bar f_k^*) +a_k\bar\eta}{\ell_k(\bar f_k^*)}\\
        & \leq 2+2\bar\eta\frac{a_k}{b_k}\,.
    \end{align*}
\end{proof}

\section*{Acknowledgments}

This work has been partially supported by project MIS 5154714 of the National Recovery and Resilience Plan Greece 2.0 funded
by the European Union under the NextGenerationEU Program.

\bibliographystyle{alpha}  
\bibliography{references} 

\newpage
\appendix
\section{Missing Statements and Proofs from Sections \ref{sec:prelims} and \ref{sec:CMs}}

The following lemma is a statement from \cite{algorithmica/ChristodoulouMP14}. We present the proof here for completeness.

\begin{lemma}
    \label{auxLem0}
    Consider a parallel-link network $G=(V,E)$, and suppose that two links $i$ and $j$ share the same constant term, i.e., $b_i = b_j$. Construct a new parallel-link network $G'=(V,E')$, with $E' = E\setminus\{i,j\}\cup\{e\}$, where $e$ is a link that \say{merges} $i$ and $j$ with latency function $\ell_e(x) = \frac{a_ia_j}{a_i+a_j}x+b_i$. Then, if a flow $f$ is a Nash flow (respectively, an optimal flow) in $G$ then $f'$, with $f'_s=f_s$, for all $s\notin \{i,j\}$, and $f'_e=f_i+f_j$, is a Nash (optimal) flow in $G'$. Moreover, $C(f)=C(f')$.
\end{lemma}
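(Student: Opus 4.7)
The key insight is that when $b_i = b_j$, both the Wardrop condition for Nash flows and the marginal-cost equality for optimal flows force the same linear relation $a_i f_i = a_j f_j$ between the two flow values on links $i$ and $j$, and this is precisely the split under which the harmonic combination $\ell_e(x) = \tfrac{a_i a_j}{a_i+a_j}x + b_i$ reproduces the correct common latency and marginal cost. Assuming $a_i, a_j > 0$ throughout (the degenerate case $a_i = 0$ reduces trivially since link $i$ dominates link $j$ and the merged link then coincides with link $i$), I would split the proof into the Nash case, the optimal case, and the cost identity, leaving all links $s \notin \{i,j\}$ untouched so their contributions are automatically preserved.

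\textbf{Nash flow direction.} I would first show that in any Nash flow $f$ of $G$, the two links are used together: if $f_i > 0$ and $f_j = 0$, the Wardrop condition would give $a_i f_i + b_i \le b_j = b_i$, forcing $f_i = 0$, a contradiction. When both are used, $\ell_i(f_i) = \ell_j(f_j)$ combined with $b_i = b_j$ yields $a_i f_i = a_j f_j$. Writing $F = f_i + f_j$, this uniquely determines $f_i = \tfrac{a_j}{a_i+a_j}F$ and $f_j = \tfrac{a_i}{a_i+a_j}F$, so the common latency on these two links is $\tfrac{a_i a_j}{a_i+a_j}F + b_i = \ell_e(F)$, giving the Wardrop condition in $G'$ at $f'_e = F$. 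The converse direction is symmetric: given $f'$ Nash in $G'$, split $f'_e$ by the same ratio and verify that the resulting flow satisfies Wardrop's condition in $G$.

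\textbf{Optimality and cost preservation.} The optimal case follows by the same template with Wardrop replaced by marginal-cost equality $2 a_i f_i + b_i = 2 a_j f_j + b_j$, which again gives $a_i f_i = a_j f_j$ and the same split of $F$. A short computation shows that $\ell_e$ has marginal cost $\tfrac{2 a_i a_j}{a_i+a_j}F + b_i = 2 a_i f_i + b_i$ at $F$, matching the common marginal cost in $G$ and the marginal costs on the remaining links. For the cost identity, letting $c = a_i f_i = a_j f_j$, I would observe that $a_i f_i^2 + a_j f_j^2 = c(f_i + f_j)$ and $\tfrac{a_i a_j}{a_i+a_j}(f_i+f_j)^2 = c(f_i + f_j)$ (using $f_i + f_j = \tfrac{c(a_i+a_j)}{a_i a_j}$), so $f_i\ell_i(f_i) + f_j\ell_j(f_j) = F\,\ell_e(F)$, while all other link contributions are unchanged, giving $C(f) = C(f')$. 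The only real obstacle is the bookkeeping for the corner cases where both links are unused or where $a_i = 0$, both of which are immediate checks.
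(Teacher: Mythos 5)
Your proposal is correct and follows essentially the same route as the paper's proof: both exploit the equal-latency (resp.\ equal-marginal-cost) condition on the used links to pin down the split of flow between $i$ and $j$, verify that the merged link reproduces the common (marginal) latency, and then check the cost identity by direct algebra. The only cosmetic difference is that you parametrize by $F=f_i+f_j$ and the relation $a_if_i=a_jf_j$ whereas the paper parametrizes by the common latency $\ell$, and you additionally spell out the corner cases ($f_j=0$, $a_i=0$) that the paper leaves implicit.
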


\begin{proof}
    Suppose first that $f$ is a Nash flow of $G$, then the latency on all used links must be equal (Wardrop's first principle). That is, $\ell_i(f_i)=\ell_j(f_j)=\ell_s(f_s)$ for all $s$. Let $\ell_s(f_s)=\ell$. Then, $f_i=\frac{\ell-b_i}{a_i}$ and $f_j=\frac{\ell-b_j}{a_j}=\frac{\ell-b_i}{a_j}$.  Then, 
    
    $$\ell_e(f_e)=\frac{a_ia_j}{a_i+a_j}(f_i+f_j)+b_i=\frac{a_ia_j}{a_i+a_j}\left(\frac{1}{a_i}+\frac{1}{a_j}\right)(\ell-b_i)+b_i=\ell\,,$$
    which is equal to $\ell_s(f'_s)$ for all $s$, and therefore $f'$ is a Nash flow in $G'$. Moreover, $C(f)=\ell\sum_s f_s=\ell \sum_s f'_s=C(f')$.

    Similarly, if $f$ is an optimal flow of $G$, it should hold that the marginal costs are equal at each link, i.e., $\frac{d}{df_s} f_s \ell_s(f_s)$ is the same for all $s$; let this value be $\ell$. Then, $2a_if_i+b_i=2a_jf_j+b_j=\ell$, and so, $f_i=\frac{\ell-b_i}{2a_i}$ and $f_j=\frac{\ell-b_j}{2a_j}=\frac{\ell-b_i}{2a_j}$. The marginal cost at link $e$ is 
    $$2\frac{a_ia_j}{a_i+a_j}(f_i+f_j)+b_i=\frac{a_ia_j}{a_i+a_j}\left(\frac{1}{a_i}+\frac{1}{a_j}\right)(\ell-b_i)+b_i=\ell\,,$$

    which is equal to the marginal cost of any link $s$ in $G'$, and therefore $f'$ is an optimal flow in $G'$. Moreover, 
    \begin{align*}
    f_e\ell_e(f_e)&=(f_i+f_j)\left(\frac{a_ia_j}{a_i+a_j}(f_i+f_j)+b_i\right)\\
    &=f_i\left(\frac{a_ia_j}{a_i+a_j}(f_i+f_j)+b_i\right)+f_j\left(\frac{a_ia_j}{a_i+a_j}(f_i+f_j)+b_i\right)\\
    &= f_i\left(\frac{a_ia_j}{a_i+a_j}\left(f_i+\frac{2a_if_i}{2a_j}\right)+b_i\right)+f_j\left(\frac{a_ia_j}{a_i+a_j}\left(\frac{2a_jf_j}{2a_i}+f_j\right)+b_i\right)\\
    &= f_i\left(a_if_i+b_i\right)+f_j\left(a_jf_j+b_i\right)\\
    &=f_i\ell_i(f_i)+f_j\ell_j(f_j)\,,
    \end{align*}
    
    and therefore, $C(f)=C(f')$.
\end{proof}

\end{document}